\newcommand{\qed}{\rule{2mm}{2mm}\bigskip}
\newtheorem{algorithm}{\bf Algorithm}
\newtheorem{theorem}{\bf Theorem}
\newtheorem{lemma}{\bf Lemma}
\newtheorem{definition}{\bf Definition}
\newtheorem{proposition}{\bf Proposition}
\newtheorem{remark}{\bf Remark}
\newtheorem{example}{\bf Example}
\DeclareMathOperator*{\argmax}{argmax}
\def\qed{\quad \vrule height7.5pt width5.17pt depth0pt}
\newcommand{\rev}[1]{{\color{blue}#1}} %revise of the text
\newcommand{\com}[1]{\textbf{\color{red}(COMMENT: #1)}} %comment of the text
\newcommand{\clar}[1]{\textbf{\color{green}(NEED CLARIFICATION: #1)}}
\newcommand{\rev}[1]{#1}
\newcommand{\com}[1]{}
\newcommand{\clar}[1]{}
\begin{document} 

\title{Spectrum Sharing as Spatial Congestion Games}
\author{Sahand Ahmad, Cem Tekin, Mingyan Liu, Richard Southwell, Jianwei Huang %and Yunnan Wu% <-this % stops a space
\thanks{This work is supported by NSF award CNS-0238035, CCF-0910765, and through collaborative participation in the Communications and Networks Consortium sponsored by the U. S. Army Research Laboratory under the Collaborative Technology Alliance Program Cooperative Agreement DAAD19-01-2-0011. An earlier version of this paper appeared in GameNets'09.}
\thanks{S. Ahmad, C. Tekin and M. Liu are with the Electrical Engineering and Computer Science Department, University of Michigan, Ann Arbor, MI 48105, USA,
        {\{shajiali,cmtkn,mingyan\}@eecs.umich.edu},
R. Southwell and J. Huang are with the Chinese University of Hong Kong, Hong Kong,
	richardsouthwell254@gmail.com, jwhuang@ie.cuhk.edu.hk.}
%Y. Wu is with Microsoft Research, Redmond, WA 98052, USA,
%        {\tt yunnanwu@microsoft.com}.}
}

\maketitle
%\thispagestyle{empty}
%\pagestyle{empty}

%%%%%%%%%%%%%%%%%%%%%%%%%%%%%%%%%%%%%%%%%%%%%%%%%%%%%%%%%%%%%%%%%%%%%%%%%%%%%%%%
\begin{abstract}

In this paper, we \rev{present} and analyze the properties of a new class of games - the spatial congestion game (SCG), which is a generalization of the classical congestion game (CG). In a classical congestion game,
multiple users share the same set of resources and a user's payoff for using any resource is a function of the total number of users sharing it.  \rev{As a potential game}, this game enjoys some very appealing properties, including the existence of a pure strategy Nash equilibrium (NE) and that every improvement path is finite and leads to such a NE (also called the finite improvement property or FIP). % which is also a local optimum to a potential function.
\rev{While it's tempting to use this model to study spectrum sharing, it does not capture the spatial reuse feature of wireless communication,} 
%On the other hand, it does not model well spectrum sharing and spatial reuse in a wireless network, 
where resources (interpreted as channels) may be {\em reused} without increasing congestion provided that users are located far away from each other.  This motivates \rev{us} to study an extended form of the congestion game where a user's payoff for using a \rev{resource} is a function of the number of its {\em interfering} users sharing \rev{it}. %, rather than the total number of users using the channel.  
This naturally \rev{results in} a {\em spatial congestion game} \rev{(SCG)}, \rev{where} users are placed over a network (or a conflict graph).  We study fundamental properties of a spatial congestion game; in particular, we seek to answer under what conditions this game possesses the \rev{finite improvement property} or a \rev{Nash equilibrium}.  We also discuss the implications of these results when applied to wireless spectrum sharing. %\com{For abstract: Maybe we can further shrink the part about CG, and further expand the part on the CG-RR. Also, maybe it would be good to emphasize the connection between the new game and the cognitive radio.}

\end{abstract}

%%%%%%%%%%%%%%%%%%%%%%%%%%%%%%%%%%%%%%%%%%%%%%%%%%%%%%%%%%%%%%%%%%%%%%%%%%%%%%%%

%!TEX root = final-draft.tex
%SourceDoc final-draft.tex

\section{Introduction}\label{sec-intro}

%\rich{c:These are my comments - Richard}

%\rich{c:When I don't start a sentence with `c:' that means what I am saying is not a comment -but a suggested alteration}

%\rich{c: I still have not altered the name of our systems. I have added a little text here and there. Where I put `*' that means I have deleted something.}

In this paper, we study \rev{a} \emph{spatial congestion game} (SCG), which is a generalized form of the class of non-coopertive strategic games known as {congestion games} (CG) \cite{Rosenthal1973,vocking2006cgo}. We analyze the properties of the SCG and discuss its application to spectrum sharing in multi-channel wireless networks.

%\com{The 2nd to 4th paragraphs are overlapping with Section II. I have reduced them a little bit.}

In a classical congestion game, multiple users share multiple resources. A user's payoff \footnote{\rev{One can also consider the cost of using a resource instead of payoff}.  If we define the cost as the inverse of the payoff, then maximizing the payoff is equivalent to minimizing the cost. For simplicity of presentation, we will only refer to the maximization of payoff in this paper.}
%\com{If we agree on this footnote, then there is no need to consider ``cost'' in later descriptions of the game. I deleted some of the ``cost'' in later parts.}}
for using a particular resource depends on the number of users \rev{simultaneously} using that resource. A formal description is provided in Section \ref{sec-review}.
%
%
%A congestion game $\Gamma$ is given by the tuple $({\cal N}, {\cal R}, (\Sigma_i)_{i\in\cal{N}}, (g_r)_{r\in\cal{R}})$, where ${\cal N}=\{1, 2, \cdots, N\}$ denotes a set of players/users, ${\cal R}=\{1, 2, \cdots, R\}$ the set of resources, $\Sigma_i\subset 2^{\cal{R}}$ the strategy space of player $i$, and $g_r: {\mathsf N}\rightarrow{\mathsf Z}$ a payoff (or cost) function associated with resource $r$. Specifically $g_r$ is a function of the total number of users of resource $r$.  A player in this game aims to maximize (minimize) its total payoff (cost) which is the sum over all resources its strategy involves.  More detailed and formal description of this class of games are provided in Section \ref{sec-review}.
%
The congestion game framework is well suited to model resource competition
where the resulting payoff is a function of the level of congestion (number of active users).
%sharing the same resource leads to increasing cost and/or decreasing payoff, as functions of the number of players sharing the resource.
It has been extensively studied within the context of \rev{wireline} network routing, see for instance the congestion game studied in \cite{Fabrikant04}, %\footnote{Note that while referred to as ``network congestion game'', the game studied in \cite{Fabrikant04} is essentially a classical congestion game with resources being links in a network.  By contrast, the spatial congestion game defined in this paper  focuses on the relationship among players induced by the network structure (e.g., spatial relationship in a wireless network).}. In that case, 
\rev{where} each source node seeks the minimum delay path to a destination node, and the delay of a link depends on the number of flows going through that link.
\rev{It has recently been used in wireless network modeling, e.g., access point selection in WiFi networks \cite{ercetin2008association,chen2010distributed}, resource competition in multicamera wireless surveillance networks \cite{shiang2010information}, uplink resource allocation in multichannel wireless access networks \cite{altman2009potential},  wireless channels with multipacket reception capability \cite{sanyal2010congestion}, and the impact of interference set in studying the congestion game in wireless mesh networks \cite{argento2010access}. } 

A congestion game enjoys many nice properties: \rev{it has a} pure strategy Nash Equilibrium (NE), and any asynchronous improvement path is finite and will lead to a pure strategy NE.  The latter property is also called the \emph{finite improvement property} (FIP): 
%
%Congestions games are closely related to potential games \cite{Monderer1996}, and enjoy some remarkable features.
% In particular, a congestion game is an exact potential game as it admits an exact potential function \cite{vocking2006cgo}.
%Finding a solution (Nash equilibrium or NE) to a congestion game is equivalent to finding a (local) optimal solution to this potential function.  It is also known that any improvement path is finite (in which each player's improvement move also improves the potential)
%and leads to a pure strategy NE.
%The key idea behind the FIP property is that the 
\rev{local greedy updates of selfish users collectively optimize a global objective known as the potential function, and such updates converge in a finite number of steps regardless of the updating sequence.} 

%In fact, while the system is decentralized and all players are selfish, by seeking to optimize their individual objectives they end up optimizing a global objective, also called the potential function, and doing so in a finite number of steps regardless of the updating sequence.
%\footnote{This in turn means that if the potential function of a particular congestion game has a meaningful and desirable physical interpretation, then the solution (an NE) to this decentralized game has certain built-in performance guarantee, as it is also a local optimal solution to a global objective.  This is a desirable feature as in general an NE can be fairly inefficient with respect to a given global objective function.}.
%\footnote{The same idea was used in \cite{Menon} for waveform design where the authors investigated under what conditions the problem may be formulated as a potential game.}.

Due to the above reasons, it is tempting to model resource competition in a wireless communication system as a congestion game.  However, the standard congestion game fails to capture a critical aspect of resource sharing in wireless communication: {\em interference}.
A key assumption underlying the congestion game model is that all users have an equal impact on the congestion, and therefore all that matters is the total number of users of a resource. %\footnote{This function may be user-specific (see for example the one studied in \cite{milchtaich1996cgp}), but it remains a function of the total number of active users of that resource.}.
This however is not true in wireless communication.
Specifically, if we consider channels as resources, then sharing the same channel is complicated by  interference; a user's payoff (e.g., channel quality, achievable rates, etc.) depends on {\em who} the other users are and how much interference it receives from them.  If all other simultaneous users are sufficiently far away, then sharing may not cause any performance degradation, a feature commonly known as \emph{spatial reuse}.
%
%In this sense, a user's payoff for using a particular channel is not simply a function of the {\em number} of simultaneous users, but rather a function of the {\em set} of simultaneous users.  In addition, the payoff is also user-specific in the sense that the same set of simultaneous users may mean different interference depending on who/where that user is.

The above consideration poses significant challenge in using the congestion game model depending on what type of user objectives we are interested in.  In our recent work \cite{liu-allerton08}, we addressed the user-specific interference issue within the traditional congestion game framework, by introducing a  concept called {\em resource expansion}, where we define virtual resources as certain spectral-spatial unit that allows us to capture pair-wise interference.  This approach was shown to be quite effective for user objectives like interference minimization.  %In particular, using resource expansion we were able to demonstrate that two recently published distributed interference minimization algorithms in a multi-channel multi-user system \cite{Tarokh,Misra} have equivalent congestion game form representations, thereby showing that \rev{(i)} stability and optimality results can be obtained automatically following this mapping, and \rev{(ii)} these problems can be made a lot more general by drawing from known results on congestion games.   The same idea also allows us to formulate a base station channel adaption problem in \cite{liu-allerton08}.
%
%On the other hand, this approach does not fully capture the spatial reuse feature of wireless communication when considering objectives like rate maximization, where a user selects a channel in which it has the maximum data rate (the underlying assumption being that multiple interfering users sharing the same channel leads to lower data rate for all involved).

In this paper, we take a different  and more general approach, where we generalize the standard congestion games to directly account for the interference relationship and spatial reuse in wireless networks.  This class of generalized games will be referred to as {\em spatial congestion games} (SCG).  \rev{A key ingredient in this generalization is an interference graph describing the congestion relationship among users.} %modeling of a wireless network under the protocol interference model \cite{GuptaKumar2000}.}
%
%Under this generalization, users are placed over a network represented by an interference graph.
%\rev{In the interference graph, each node represents one user in the network.} An edge exists between two users that interfere with each other.
%
%
%This motivates us to consider an extension to the standard congestion game definition, whereby
In using a resource (a wireless channel), a user's payoff is a function of the total number of users who are using the same resource and are {\em within its interference set} (i.e., connected to it by edges).  Therefore, resources are {\em reusable} beyond a user's interference set.  \rev{The original congestion game is now a special case of the extended SCG when the underlying interference graph is complete (i.e., every user interferes with every other user) \footnote{In our preliminary work \cite{liu-gamenets09} we used the term {\em network congestion games}. However, to better differentiate this class of games from routing games (see e.g., \cite{TR02,CMNV05}) which are also sometimes referred to as network congestion games, we will use the term {\em spatial congestion games} in this paper.  Note that a routing game is essentially a classical congestion game in which a user's strategy space consists of a set of feasible routes and each route consists of multiple resources (links).}.} 

\rev{Congestion games played on networks have been studied before in \cite{Bilo2008}}, where each user has the same linear payoff function. Our SCG model allows user-specific payoff functions of more general forms. In this sense our model is also a generalization of that considered in \cite{Bilo2008}.  This allows us to model systems like cognitive radio networks where \rev{technologies may vary from user to user.} 
%As far as we know, our paper is the first that systematically studies the analytical properties of the SGC. }

The applicability of the SCG to a multi-channel, multi-user wireless communication system can be easily understood.  Specifically, we consider a system where a user can only access one channel at a time, but can switch between channels.  A user's principal interest lies in optimizing its own performance (e.g., its data rate) by selecting the best channel for itself.  This and similar problems have recently captured increasing interest from the research community, particularly in the context of cognitive radio networks (CRN) and software defined ratio (SDR) technologies, where devices expected to have far greater flexibility in sensing channel availability and moving their operating frequencies. \rev{More broadly, the SCG framework} is potentially applicable to many other scenarios where resources are shared over space.

In subsequent sections we will examine what properties a SCG has. Our main findings are summarized as follows for \emph{undirected} network graphs and \emph{non-increasing} payoff functions (in the number of users sharing a resource):

\begin{enumerate}
\item The FIP property is preserved in an SCG with only two resources/channels.  Counter examples exist for three or more resources.
\item The FIP property is preserved in an SCG when all resources are identical to a user \rev{(but may be different to different users)}.  In the context of multi-channel communications, this means each channel is of equal bandwidth and quality for a user. 
\item A pure strategy NE exists in an SCG over a tree network, a loop, a regular bipartite network, and when there is a dominating resource.
\item We identify counter examples to show that an NE does not necessarily exist when the network graph is directed (meaning that the interference relationship between users is asymmetric), or when users' payoff functions are non-monotonic. 
%\item A pure strategy NE exists when there is either a dominating resource (a channel with much larger bandwidth than the rest) or when all resources are identical (all channels are equal).  Furthermore, in the latter case the FIP also exists.
\end{enumerate}
%In addition, we also show that an NE does not in general exist if the network graph is directed (meaning that the interference relationship  between users is asymmetric), or that the user payoff functions are non-monotonic.

%\rev{We identify counter examples to show that an NE does not necessarily exist when the network graph is directed (meaning that the interference relationship  between users is asymmetric), or when users' payoff functions are non-monotonic.}

It should be mentioned that game theoretic approaches have often been used to devise effective decentralized solutions to a multi-agent system.   Within the context of wireless communication networks and interference modeling, different classes of games have been studied.  An example is the well-known {\em Gaussian interference game} \cite{Tse2007,Cioffi2002}, in which a player can spread a fixed amount of power arbitrarily across a continuous bandwidth, and tries to maximize its total rate in a Gaussian interference channel over all possible power allocation strategies.  %It has been shown \cite{Tse2007} that it has a pure strategy NE, but the NE can be quite inefficient; playing a repeated game can improve the performance, while 
\rev{The Bayesian form of the Gaussian interference game was studied in \cite{Goldsmith} in the case of incomplete information.  In addition, a market based power control mechanism was investigated via supermodularity in \cite{Huang2006}, and using externality in \cite{shruti2008}.  A spectrum sharing similar to the one studied here was investigated in \cite{ali2010} using a mechanism design approach in seeking a globally optimal solution.} 
In our problem the total power of a user is not divisible, and it can only use it in one channel at a time.  This setup is more appropriate for scenarios where the channels have been pre-defined, and the users do not have the ability to access multiple channels simultaneously (which is the case with many existing devices).
%In addition, in a NCG interference is modeled using a binary interference relationship, whereas in a Gaussian interference game interference is calculated using pair-wise distance.  The former is applicable if successful reception is based on the received SNR exceeding a threshold,
%\com{Maybe we should mention that our model is also widely used and corresponds to the protocol interference model. The Gaussian interference game interference is based on the physical interference model which is a different type. This is to avoid giving people the impression that our model is inferior to the other model.} These differences lead to very different technical approaches and results.

The organization of the remainder of this paper is as follows.  In Section \ref{sec-review} we present a brief review on the background of the classical CG, and formally define the class of SCG in Section \ref{sec-problem}.  We then derive conditions under which SCG possesses the finite improvement property in Section \ref{sec-fip}.  We further show a series of conditions, on the underlying network graph and on the user payoff function in Section \ref{sec-graph}, under which an SCG has a pure strategy NE.  We discuss extensions to our work in Section \ref{sec-discussion} and conclude the paper in \ref{sec-conclusion}.

%!TEX root = final-draft.tex
%SourceDoc final-draft.tex

\section{A Review of Congestion Games}
\label{sec-review}
%% currently this is directly taken out of the Allerton paper.  Some changes are needed; need to add the part on user-specific payoff functions

In this section we provide a brief review on the definition of congestion games and their known properties\footnote{This review along with some of our notations are primarily based on references \cite{Rosenthal1973,vocking2006cgo,Monderer1996}.}.
We then discuss why the classical congestion game does not capture spatial reuse and motivate our generalized spatial congestion games.

\subsection{Congestion Games}

%\com{We have a lot of notations in this paper and it might be good to make things more consistent.
%\begin{itemize}
%\item We can use $\mathcal{I}=\{1,...,I\}$ to denote the set of users and $i$ as the user index. This can avoid the confusion of user number $n$ and the total number of users $N$. (I have not changed this).
%\item To make it easier to differentiate the user and channel indices, maybe we should be consistent and always use superscript for user index and subscript for channel index. (I have not changed this.)
%\item For user's payoff (currently $g^{i}$), maybe we can use $q^{i}$ to differentiate with the channel payoff $g_{r}$. (I have changed this)
%\item For interference set (currently $N_{i}$, maybe we can use $K_{i}$. (I have changed this.)
%\item Use bold font for strategy set $\boldsymbol{\sigma}$. (I have changed this.)
%\item Finally, it would be good to have a table of notations somewhere so that readers can easily check.
%\end{itemize}
%}

%\com{(By Jianwei): Right now the user index $i$ is used in both subscript and superscript. I am thinking that maybe we should move it all to superscript? Will this look better? }

Congestion games \cite{Rosenthal1973,vocking2006cgo} are a class of strategic games given by the tuple $({\cal I}, {\cal R}, (\Sigma_i)_{i\in{\cal I}}, (g_r)_{r\in{\cal{R}}})$, where ${\cal I}=\{1, 2, \cdots, N\}$ denotes a set of users, ${\cal R}=\{1, 2, \cdots, R\}$ a set of resources, $\Sigma_i\subset 2^{\cal R}$ the strategy space of player $i$, and $g_r: \mathbb{N}\rightarrow\mathbb{Z}$ a payoff (or cost) function associated with resource $r$.
The payoff (cost) $g_r(\cdot)$ of resource $r$ is a function of the total number of users using that resource, and in general is assumed to be non-increasing (non-decreasing).  A player in this game aims to maximize (minimize) its total payoff (cost) which is the sum total of payoff (cost) over all resources its strategy involves. For the rest of the paper, we will only refer to payoff maximization.

Denoting by $\boldsymbol{\sigma}=(\sigma_1, \sigma_2, \cdots, \sigma_N)$ the strategy profile, where $\sigma_i\in \Sigma_i$, user $i$'s total payoff is given by
\begin{eqnarray}
q^i(\boldsymbol{\sigma}) = \sum_{r\in \sigma_i} g_r(n_r(\boldsymbol{\sigma})) ~, 
\end{eqnarray}
where $n_r(\boldsymbol{\sigma})$ is the total number of users using resource $r$ under the strategy profile $\boldsymbol{\sigma}$, and $r\in \sigma_i$ denoting that user $i$ selects resource $r$ under $\boldsymbol{\sigma}$.

We can define Rosenthal's potential function $\phi: \Sigma_1 \times \Sigma_2 \times \cdots \times \Sigma_N \rightarrow \mathbb{Z}$ as
\begin{eqnarray}
\phi(\boldsymbol{\sigma}) = \sum_{r\in{\cal R}} \sum_{i=1}^{n_r(\boldsymbol{\sigma})} g_r(i)
= \sum_{i=1}^{N} \sum_{r\in\sigma_i} g_r(m_r^{i}(\boldsymbol{\sigma})) ~, \label{eqn-change-of-sums}
\end{eqnarray}
where the second equality comes from exchanging the two sums, and $m_r^{i} (\boldsymbol{\sigma})$ denotes the number of players who use resource $r$ under strategy  $\boldsymbol{\sigma}$ and whose corresponding indices do not exceed $i$ (i.e., in the set $\{1, 2, \cdots, i\}$).

Next we show that the change in user $i$'s payoff as a result of its unilateral move (i.e., all other users' strategy $\sigma_{-i}$ remain fixed) is exactly the same as the change in the potential function. This implies that the potential function may be viewed as a global objective function.
Consider player $i$, who unilaterally moves from strategy $\sigma_i$ (within the profile $\boldsymbol{\sigma}=(\sigma_{i},\sigma_{-i})$) to strategy $\sigma_i^{\prime}$ (within the profile $\boldsymbol{\sigma}^{\prime}=(\sigma_{i}^{\prime},\sigma_{-i})$).  The change of potential function is
\begin{eqnarray*}
%&&\Delta\phi (\sigma_i\rightarrow\sigma_i^{'}) \nonumber \\
&& \phi (\sigma_{i}^{\prime},\sigma_{-i})-\phi (\sigma_{i},\sigma_{-i}) \\
&=& \sum_{r\in\sigma_i^{'}, r\not\in\sigma_i} g_r(n_r(\boldsymbol{\sigma})+1) -
\sum_{r\in\sigma_i, r\not\in\sigma_i^{'}} g_r(n_r(\boldsymbol{\sigma})) \nonumber \\
&=& \sum_{r\in\sigma_i^{'}} g_r(n_r(\boldsymbol{\sigma}^{'})) - \sum_{r\in\sigma_i} g_r(n_r(\boldsymbol{\sigma}))\\
&=& g^i(\sigma_{-i}, \sigma_i^{'}) - g^{i}(\sigma_{-i}, \sigma_i) ~.
\end{eqnarray*}
The second equality comes from the fact that the number of total users does not change for any \rev{resource} that is used by both strategies $\sigma_i$ and $\sigma_i^{'}$.  To see why the first equality is true, set $i=N$, in which case this equality is a direct consequence of equation (\ref{eqn-change-of-sums}).  To see why this is true for any $1\leq i\leq N$, simply note that the ordering of users is arbitrary so any user making a change may be viewed as the $N$th user.  %This result can also be obtained directly from Rosenthal's potential definition (\ref{eqn-potential}), or more easily, from the alternative by assuming we are considering the $N$-th player.

%The above result shows that the gain (loss) caused by any player's unilateral move is exactly the same as the gain (loss) in the potential, which may be viewed as a global objective function.
Consider now a sequence of strategy changes made by users asynchronously,  in which each change improves the corresponding user's payoff (this is referred to as a sequence of improvement steps).  The result in the previous paragraph shows that the potential function also improves in every such change sequence.  Since the potential function of any strategy profile is finite, we have the following result  \cite{vocking2006cgo}: 
\begin{proposition}[finite improvement property (FIP)]
For every congestion game, every sequence of \rev{asynchronous} improvement steps is finite and converges to a pure strategy Nash Equilibrium (NE). Furthermore, this NE is a local optimum of the potential function $\phi$, defined as a strategy profile where changing one coordinate cannot result in a greater value of $\phi$.
\end{proposition}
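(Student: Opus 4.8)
The plan is to leverage the exact-potential identity just established, namely that any unilateral deviation by user $i$ changes $\phi$ by precisely the amount it changes $g^i$. First I would note that, by definition, every \emph{improvement step} is a unilateral move in which the deviating user strictly increases its own payoff, so that $g^i(\sigma_{-i},\sigma_i') > g^i(\sigma_{-i},\sigma_i)$. Feeding this strict inequality into the identity $\phi(\sigma_i',\sigma_{-i}) - \phi(\sigma_i,\sigma_{-i}) = g^i(\sigma_{-i},\sigma_i') - g^i(\sigma_{-i},\sigma_i)$ shows that $\phi$ strictly increases along every improvement step. Hence any sequence of improvement steps induces a strictly increasing sequence of potential values.

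Next I would establish finiteness. Since ${\cal R}$ is finite, each strategy space $\Sigma_i \subseteq 2^{\cal R}$ is finite, so the joint profile space $\Sigma_1 \times \Sigma_2 \times \cdots \times \Sigma_N$ is finite and $\phi$ attains only finitely many integer values. A strictly increasing sequence cannot repeat any value, so it can visit each of the finitely many attainable potential values at most once; therefore every improvement sequence must terminate after a bounded number of steps.

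Then I would argue that the terminal profile is a pure strategy NE: by construction an improvement sequence halts exactly when no user has any available improvement step, which is precisely the condition that no user can unilaterally increase its payoff, i.e. the definition of a pure NE. Finally, to obtain the local-optimum claim, I would invoke the potential identity once more, now in the reverse direction: at the terminal profile no unilateral deviation raises any $g^i$, so by the identity no unilateral deviation raises $\phi$, meaning the profile is a local maximum of $\phi$ with respect to changing a single coordinate, exactly as the statement requires.

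I do not anticipate a serious obstacle, since the heavy lifting (the exact-potential identity) is already complete and what remains is the standard monotone-potential termination argument. The one point needing explicit care is the finiteness of $\Sigma_1 \times \Sigma_2 \times \cdots \times \Sigma_N$, which must be stated to rule out infinite strictly increasing chains: strict monotonicity of $\phi$ alone guarantees only that no profile is ever revisited (acyclicity), and one still needs the finite range of $\phi$ to upgrade acyclicity to actual termination.
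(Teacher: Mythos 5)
Your proposal is correct and follows essentially the same route as the paper: it uses the exact-potential identity derived just before the proposition to show $\phi$ strictly increases along every improvement step, then concludes termination at a pure NE that is a local optimum of $\phi$. If anything, your treatment of the finiteness step is more careful than the paper's one-line remark (``since the potential function of any strategy profile is finite''), since you correctly identify that it is the finiteness of the joint profile space, and hence of the range of $\phi$, that converts strict monotonicity into actual termination.
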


It is not difficult to see why the standard definition of a congestion game does not capture spatial reuse of wireless communication.  In particular, if we consider channels as resources, then the payoff $g_r(n)$ for using channel $r$ when there are $n$ simultaneous users does not reflect reality:
the function $g_r(\cdot)$ in general takes a user-specific argument since different users experience different levels of interference even when using the same resource.
%is user-specific in that the quantity $n$ is perceived differently by different users, depending on how many {\em interfering} users a user has.
This user specificity is also different from that studied in \cite{milchtaich1996cgp}, where $g_r(\cdot)$ is a user-specific function $g_r^{i}(\cdot)$ but it takes the \emph{same} non-user specific argument $n$.
%In other words, while the user-specific payoff is reflected in the functional form of the payoff function, in our context it is reflected through the user-specific argument.
%
%In our recent paper \cite{liu-allerton08} we took the approach of adopting alternative definitions of resources to circumvent some of the above problem.  In particular, by defining resources as certain spectrum-space units we were able to map some existing formulations on interference minimization into standard congestion game forms, and therefore were able to directly apply properties associated with congestion games.  However, a major limitation of this approach is that it does not work well when user objectives are rate maximizing rather than interference minimizing.
To analyze and understand the consequence of this difference,
%what happens when a user's payoff $g_r(n)$ for using channel $r$ where there are a total of $n$ interfering users (including itself) is a non-increasing function of $n$,
we would need to extend and generalize the definition of the standard congestion game.

For the rest of this paper, the term {\em player} or {\em user} specifically refers to a {\em pair} of transmitter and receiver in a wireless network.  Interference in this context is between one user's transmitter and another user's receiver.  This is commonly done in the literature, see for instance \cite{Tse2007}.  We will also assume that each player has a fixed transmit power.
%
%In most of our derivation interference is assumed to be symmetric, i.e., the pair-wise interference between player $i$ and $j$ are perceived equally by both $i$ and $j$.  This assumption is a simplification, and is adopted   since depending on the location of the two sets of transmitters and receivers, the interference relationship is not in general symmetric.  This assumption was justified in \cite{Tarokh} by limiting each player to a specific cluster, and assuming clusters are relatively far away from each other.  In our work presented below, this assumption may be relaxed by adopting {\em user-specific} payoff (cost) functions.  As was shown in \cite{milchtaich1996cgp}, congestion games with user specific payoff functions in general do not have an exact potential function or the FIP property, but nevertheless always possess a pure strategy NE when we limit the number of resources each strategy can involve to one.

%!TEX root = final-draft.tex
%SourceDoc final-draft.tex

\section{Problem Formulation}
\label{sec-problem}

%\rich{c: I put the reference to graphical congestion games below, I wasn't sure where to include it, it may be more appropriate in the introduction.}

In this section we formally define our generalized congestion game, the {\em spatial congestion game} (SCG).
%
%In Michailch's model we assume for any allocation of resources $\sigma$, user-specific payoff for user $i$ will be equal to $f_r^i (n_r(\sigma))$ where $r$ is the resources(channel) the user chooses to use. In the model we are considering we
Specifically, an $N$-\rev{player} SCG is given by $\Gamma_{N}=({\cal I}, {\cal R}, (\Sigma_i)_{i\in\cal{I}}, \{\mathcal{K}_i\}_{i\in\cal{I}}, \{g_r^i\}_{r\in{\cal{R}}, i\in\cal{I}})$, where $\mathcal{K}_i$ is the interference set of user/player $i$ (i.e., users interfering with user $i$), while all other elements maintain the same meaning as in a standard CG. The payoff user $i$ receives for using resource $r$ is given by $g_r^i(n_r^i(\boldsymbol{\sigma})+1)$ where $n_r^i(\boldsymbol{\sigma}) = |\{j: r \in \sigma_j , j \in \mathcal{K}_i\}|$.  That is, user $i$'s payoff for using resource $r$ is a (user-specific) function of the number of users interfering with itself, plus itself.  Here we have explicitly made the payoff functions user-specific, as evidenced by the index $i$ in $g_r^i(\cdot)$.  This is done in an attempt to capture the fact that users with different coding/modulation schemes may obtain different rates from using the same channel even when facing the same level of interferences. 

%\rich{Congestion games played on networks have been studied before in \cite{Bilo2008}. This work considers simpler systems where each user has the same linear payoff function. Our systems allow user-specific payoff functions of more arbitrary forms. This allows us to model systems like cognitive radio where technologies and demands may vary between one user and another.}

%For convenience, we will also denote by $n_r^{i-}(\sigma) = |\{j: r \in \sigma_j , j \in \rev{\mathcal{K}_i}, j\neq i \}|$, i.e., the number of users interfering with user $i$ excluding itself.\com{Since we always use the concept of $n_r^{i-}(\sigma)$ later. It might be good to just define $n_r^{i}(\sigma)$ as the number of users interfering with user $i$ excluding itself. Then we can just add 1 if we need to include the user itself. I have not changed this notation in the main text. }
%

A user's payoff is the sum of payoffs from all the resources it uses.
Note that if a user is allowed to simultaneously use all available resources, then its best strategy is to simply use all of them regardless of other users, provided that $g_r^i$ is a non-negative function.  If all users are allowed such a strategy, then the existence of an NE is trivially true.
%In this case existence of a Nash Equilibrium for users is trivial as once we are in a state that each single user uses all the channels he is allowed to transmit in then obviously we are in a Nash equilibrium state as no user will have better payoff by uniformly reducing the channel set he is already transmitting in.

In this paper, we will limit our attention to the case where each user is allowed only one channel at a time, i.e., its strategy space $\Sigma_{i} =  \mathcal{R}$ consists of $R$ single channel strategies.
%In the special case that each user can choose and use only one channel problem will have a nontrivial solution.
In this case, the payoff user $i$ receives for using a single channel $r$ is given by $g_r^i(n_r^i+1)$ where $n_r^i(\boldsymbol{\sigma})=|\{j: r = \sigma_j , j \in \mathcal{K}_i \}|$.

It is easy to see that we can equivalently represent this problem on the following directed graph, where a node represents a user and a directed edge connects node $i$ to node $j$ if and only if $i \in \mathcal{K}_j$.
The spatial congestion game can now be stated as a coloring problem \footnote{\rev{We will use several colored graphs in our analysis, which may not show as effectively in a black/white version.}},  
%thus this paper can be best understood by reading on the computer screen or printing with a color printer.}, 
where each node picks a color and receives a value depending on the conflict (number of same-colored neighbors to a node); the goal is to see whether an NE exists and whether a decentralized selfish scheme leads to an NE.  In this paper we will limit our attention to the case of undirected graphs, where there is an undirected edge between nodes $i$ and $j$ if and only if $i \in \mathcal{K}_j$ and $j \in \mathcal{K}_i$.  This has the intuitive meaning that if node $i$ interferes with node $j$, the reverse is also true.  This symmetry does not always hold in reality, but is \rev{often a good approximation}, and helps us obtain meaningful insight. Another reason for this assumption is that an NE does not always exist in a directed graph (as we show in the Appendix via a counter example). 

For simplicity of exposition, in subsequent sections we will often present the problem in its coloring version, and will use the terms {\em resource}, {\em channel}, {\em color}, and {\em strategy} interchangeably.
\rev{For the remainder of the paper, unless stated otherwise} we shall assume that every SCG we consider has the following properties: 
%
%\begin{enumerate}
%
%\item 
(1) users only employ one resource at a given time; 
%
%\item 
(2) the payoff functions are user-specific and non-increasing; 
%
%\item 
and (3) the interference graph is undirected.
%\end{enumerate}

%!TEX root = final-draft.tex
%SourceDoc final-draft.tex

\section{Existence of the Finite Improvement Property}
\label{sec-fip}

%\com{It might be a good idea to add a table summarizing the key results in Sections VI, V, and VI, including what have been proved and what are the counter examples.}

In this section we investigate whether the SCG possesses the FIP property as in the traditional CG.  \rev{If a game has this property}, it immediately follows that it has an NE as we described in Section \ref{sec-review}.
% Nash Equilibrium (we start from a random coloring and as long as there exists a node which is not having it's extremum value, we change it's color resulting in a higher value for that node but as FIP property means such a sequence of improvement is finite we will end up at a state which is nothing but nash equilibrium) , proving existence of FIP property would be a good way to approach the problem.
Below we show that in the following three cases an SCG possesses the FIP property: (1) when there are only two resources to choose from, (2) when all resources are identical to a user, for all users, and (3) the graph is complete.  %As a result an NE exists in each case.

% this game indeed has the FIP property and as a result an NE exists.   %Furthermore, this property holds in the case of two resources even when the payoff functions are user-specific.
%
%\com{Maybe should move this to later -- In the appendix we show through a counter example that for the case of $3$ or more colors the FIP property does not hold.  This also implies that in such cases an exact potential function does not exist for this game, as the FIP is a direct consequence of the existence of a potential function.}  %FIP property for 2 colors holds for user-specific payoff functions which is a more general case. Below is the proof for the case of user-specific functions.

\subsection{The Finite Improvement Property for 2 Resources}

%In this section we prove that the finite improvement property holds when there are only two resources/colors to choose from and a user can only use one at a time.
We establish this result by contradiction. Suppose that we have a sequence of asynchronous\footnote{We will remove the word {\em asynchronous} in subsequent presentation with the understanding that whenever we refer to updates they are assumed to be asynchronous updates, i.e., there will not be two or more users changing their strategies simultaneously at any time.} updates that starts and ends in the exact same \emph{state} (e.g., color assignment) for all users. We denote such a sequence by
\begin{eqnarray}
U = \{u(1), u(2), \cdots u(T)\},
\end{eqnarray}
where $u(t)\in \{1, 2, \cdots, N\}$ denotes the user making the change at time $t$, and $T$ is the length of this sequence.
The starting state  of the system is given by
\begin{eqnarray}
S(1) = \{s_1(1), s_2(1), \cdots, s_N(1)\},
\end{eqnarray}
where $s_i(1) \in \{ r, b\}$, i.e., the color of each user is either ``r'' for Red, or ``b'' for Blue.  A user $i$'s color $s_{i}(t)$ is defined for time $t^{-}$, i.e., right before a color change is made by some user at time $t$.
%In other words, $s_i(t)$ denotes the color of user $i$ at time $t^{-}$.
Since there are only two colors, we use the notation $\bar{s}$ to denote the opposite color of a color $s$.  %The states after the last round of change at time $T$ is denoted by $S(T)$.

%\com{(By Jianwei): I remove the following sentence: ``The states after the last round of change at time $T$ is denoted by $S(T)$'', as it conflicts with the definition that $s(t)$ denotes color at $t^{-}$. See if the current description is clear.}

Since this sequence of updates form a loop, %i.e., $S(1^{-})=S(T)$,
we can naturally view \rev{them} as being placed \rev{on} a circle, starting at time $1^{-}$ and ending at $T^{+}$, when the system returns to its original state.  This is shown in Figure \ref{fig-circle}. Note that traversing the circle starting from any point results in an improvement path; hence the notion of a starting point becomes inconsequential. %\com{Is it enough to write $T$ instead of $T^{+}$?}
\begin{figure}[h]
\centering
\includegraphics[width=2.0in]{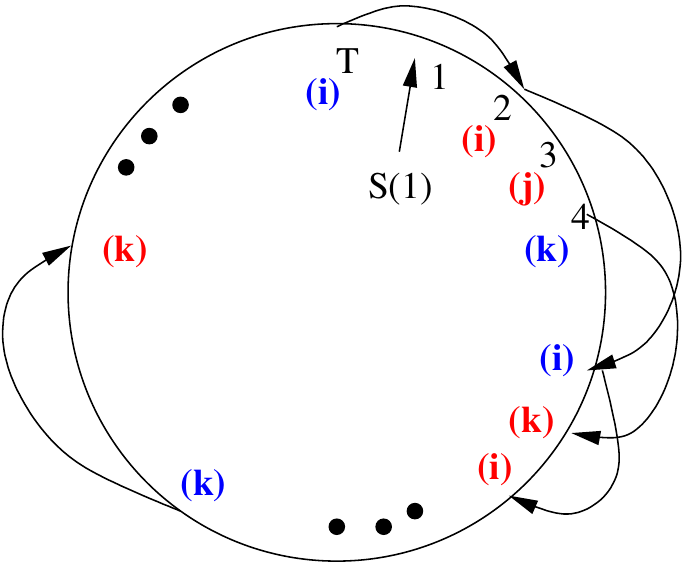}
\caption{Representing an improvement loop on a circle: times of updates $t$ and the updating user $(u(t))$ are illustrated along with their color right before a change. An arrow connects a single user's two consecutive color changes. We show that such improvement loop is not possible.}
%\com{Figures 1 and 2: we might need to explain the starting and editing points of each arrow and its physical meaning.}}
\label{fig-circle}
\end{figure}

Since this sequence of updates is an improvement path, each change must increase the payoff of the user making the change\footnote{Here we assume that a user only makes a change if there is {\em strict} increase in its payoff.}. For example, suppose user $i$ changes from red to blue at time $t$, and $i$ has $x$ red neighbors and $y$ blue neighbors at $t$.\footnote{Since the users update their strategies in an asynchronous fashion, $x$ and $y$ do not change between $t^{-}$ and $t^{+}$.} Then we must have:
\begin{align}
g_b^{i}(y+1) > g_r^{i}(x+1)~.
\end{align}
%\com{Should we explicit say that a user will only switch if it gains positive payoff increase (instead of not decreasing its payoff)? In other words, do we want to exclude the $\geq$ case?}
%where we denote user $i$'s specific payoff function as $g^i_r(\cdot)$ and $g^i_b(\cdot)$, respectively, for colors $r$ and $b$.
Similarly, we can obtain one inequality for each of the $T$ changes. We shall show that these $T$ inequalities cannot be consistent with each other.
The challenge here is that this contradiction has to hold for arbitrary non-increasing functions $\{g_r^{i},g_b^{i}\}$.  The way we address this challenge is to show that the above inequality leads to another inequality that does {\em not} involve the payoff function when we consider pairs of reverse changes by the same user.  {The following definition will be useful for the proof.}%The challenge here is that we need to show a contradiction for arbitrary non-increasing functions $\{g_r^{i},g_b^{i}\}$. Our approach to address this challenge is to consider pairs of reverse changes. For each pair of reverse changes of the same user, we prove an inequality that do not involve the payoff function (Lemma~\ref{lemma:reversechange}).

%In this subsection we prove that when there are two resources, the finite improvement property holds. Since there are two resources/colors, we will often use $s$ to denote one color and $\bar{s}$ to denote the opposite color.

\begin{definition}[Reverse-change pairs]
Consider an arbitrary user $i$'s two reverse color changes in an improvement path, one from $s$ to $\bar{s}$ at time $t$ and the other from $\bar{s}$ to $s$ at time $t'$. Let $\mathcal{SS}_{t,t'}^i$ denote the set of $i$'s neighbors (not including $i$) who have the same color as $i$ at both times of change (i.e., at $t^{-}$ and $t'^{-}$, respectively).  Let $\mathcal{OO}_{t,t'}^i$ denote the set of $i$'s neighbors (not including $i$) who have the opposite color as $i$ at both times of change. Similarly, we will denote by $\mathcal{SO}_{t,t'}^i$ (respectively $\mathcal{OS}_{t,t'}^i$) the number of $i$'s neighbors whose color is the same as (opposite of, respectively) $i$'s at the first update and the opposite of (same as, respectively) $i$'s at the second update.
\label{def:SSOO}
\end{definition}

%{The following lemma formally states our result.}

\begin{lemma}\label{lemma:reversechange}
 {\bf (Reverse-change inequality)}
Consider {a} spatial congestion game {with} two colors.  Suppose an arbitrary user $i$ makes two reverse color changes in an improvement path, one from $s$ to $\bar{s}$ at time $t$ and the other from $\bar{s}$ to $s$ at time $t'$.  Then we have
\begin{align}
|\mathcal{SS}_{t,t'}^i| > |\mathcal{OO}_{t,t'}^i|, ~~~ \forall i\in\cal{I} ~.
\end{align}
That is, among $i$'s neighbors, there are strictly more users \rev{with} the same color as $i$ at both times of change than those with the opposite color as $i$ at both times of change.
%In other words, in between $i$'s two reverse color changes, there are more nodes that make the same color switch as $i$ does in its first color change than the number of nodes that make the opposite color switch. (This is not correct -- in both these sets the node make an odd number of changes)
\end{lemma}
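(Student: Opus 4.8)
The plan is to argue by contradiction, converting the two improvement inequalities attached to the reverse-change pair into a statement about set sizes alone, so that the payoff functions drop out entirely. First I would record the improvement condition for each of the two moves, using the convention already established in the excerpt: a change from color $s$ to $\bar s$ at a time when $i$ has $a$ neighbors of color $s$ and $b$ neighbors of color $\bar s$ forces $g_{\bar s}^i(b+1) > g_s^i(a+1)$. Writing this once at $t$ (the move $s \to \bar s$) and once at $t'$ (the move $\bar s \to s$) yields a pair of inequalities, one of the form $g_{\bar s}^i(\cdot) > g_s^i(\cdot)$ and one of the form $g_s^i(\cdot) > g_{\bar s}^i(\cdot)$.

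Next I would re-express the four neighbor counts appearing in these inequalities in terms of the partition $\{\mathcal{SS}^i_{t,t'}, \mathcal{OO}^i_{t,t'}, \mathcal{SO}^i_{t,t'}, \mathcal{OS}^i_{t,t'}\}$ of $i$'s neighbors: every neighbor falls into exactly one class according to its colors at the two snapshots $t^-$ and $t'^-$. Abbreviating $SS = |\mathcal{SS}^i_{t,t'}|$ and similarly for the other three classes, the count of $i$'s like-colored neighbors at $t^-$ is $SS + SO$ and its opposite-colored count is $OO + OS$, while at $t'^-$ (where $i$ now wears $\bar s$) the opposite- and like-colored counts are $OO + SO$ and $SS + OS$. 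Substituting these into the two improvement conditions gives
\begin{equation*}
g_{\bar s}^i(OO+OS+1) > g_s^i(SS+SO+1), \qquad g_s^i(OO+SO+1) > g_{\bar s}^i(SS+OS+1).
\end{equation*}

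I would then suppose, toward a contradiction, that $SS \le OO$. Monotonicity of each payoff function lets me relax the arguments containing $SS$ up to the matching arguments containing $OO$: from $SS + OS + 1 \le OO + OS + 1$ I get $g_{\bar s}^i(SS+OS+1) \ge g_{\bar s}^i(OO+OS+1)$, and from $SS + SO + 1 \le OO + SO + 1$ I get $g_s^i(SS+SO+1) \ge g_s^i(OO+SO+1)$. Chaining the second improvement inequality with the first monotonicity bound produces $g_s^i(OO+SO+1) > g_{\bar s}^i(OO+OS+1)$, while chaining the first improvement inequality with the second monotonicity bound produces $g_{\bar s}^i(OO+OS+1) > g_s^i(OO+SO+1)$. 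These two statements are direct opposites, a contradiction; hence $SS > OO$, which is the claim.

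The crucial point, and the step I expect to be the real obstacle, is exactly the one the paper flags: the conclusion must hold simultaneously for \emph{all} non-increasing payoff functions, so no specific functional form may be exploited. The partition-plus-monotonicity maneuver is what dissolves this difficulty, because it is engineered so that the two unknown quantities $g_s^i(OO+SO+1)$ and $g_{\bar s}^i(OO+OS+1)$ appear in both final chains with their inequality reversed, causing the payoff functions to cancel out of the comparison and leaving a purely combinatorial contradiction. The only genuine care needed is in the bookkeeping of the four classes: a neighbor's color may well change between $t$ and $t'$, so the classes are defined solely by the two snapshot colors rather than by any constancy assumption, and one must check that this still gives a clean four-way partition and the count identities above. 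Beyond that, no case analysis should be required.
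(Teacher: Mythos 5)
Your proof is correct and is essentially the paper's own argument: you derive the same two improvement inequalities from the reverse-change pair (expressed via the four-way partition $\mathcal{SS},\mathcal{OO},\mathcal{SO},\mathcal{OS}$), assume $|\mathcal{SS}^i_{t,t'}|\le|\mathcal{OO}^i_{t,t'}|$, and use monotonicity of the payoff functions to chain them into a contradiction. The only cosmetic difference is that the paper folds the two monotonicity bounds and one improvement inequality into a single chain contradicting the other improvement inequality, whereas you produce two opposite strict inequalities between the same pair of quantities; these are logically the same maneuver.
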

\begin{proof} Since this is an improvement path, whenever $i$ makes a change it is for higher payoff.  Thus we must have at the time of its first change and its second change, respectively, the following inequalities:
\begin{align}
g^i_{\bar{s}}(|\mathcal{OS}_{t,t'}^i|+|\mathcal{OO}_{t,t'}^i|+1) &> g^i_{s}(|\mathcal{SO}_{t,t'}^i|+|\mathcal{SS}_{t,t'}^i|+1) ~; \label{eqn-lemma1-1}\\
g^i_s(|\mathcal{SO}_{t,t'}^i|+|\mathcal{OO}_{t,t'}^i|+1) &> g^i_{\bar{s}}(|\mathcal{OS}_{t,t'}^i|+|\mathcal{SS}_{t,t'}^i|+1) ~. \label{eqn-lemma1-2}
\end{align}
We now prove the lemma by contradiction.  Suppose that the statement is not true and that we have $|\mathcal{SS}_{t,t'}^i| \le |\mathcal{OO}_{t,t'}^i|$.  Then due to the non-increasing assumption on the payoff functions we have
\begin{eqnarray*}
g^i_{\bar{s}}(|\mathcal{OS}_{t,t'}^i|+|\mathcal{SS}_{t,t'}^i|+1) &\geq&  g^i_{\bar{s}}(|\mathcal{OS}_{t,t'}^i|+|\mathcal{OO}_{t,t'}^i|+1) \\
&>& g^i_s(|\mathcal{SO}_{t,t'}^i|+|\mathcal{SS}_{t,t'}^i|+1) \\
&\geq& g^i_s(|\mathcal{SO}_{t,t'}^i|+|\mathcal{OO}_{t,t'}^i|+ 1)
\end{eqnarray*}
where %the first and the third inequalities are due to the non-increasing assumption on the payoff functions, and
the second inequality is due to (\ref{eqn-lemma1-1}).  This however contradicts with (\ref{eqn-lemma1-2}) and thus completes the proof.
\end{proof}
%Without loss of generality we can assume the node changes from color red to blue at the first change(and blue to red at second change). we can also assume $m$ (respectively $n$) is the number of red (blue) neighbors of the node before the first change who are as well red (blue) by the time of second color change. Also we assume $b$ ($a$) is the number of the neighbors who are red (blue) initially and have changed color into blue (red) right before the second color change. As we know that when a node changes color it is for an improvement in value that would mean we will have $f_b(n+a+1) > f_r(m+b+1)$ and $f_r(m+a+1)>f_b(n+b+1)$ . This two inequalities together with the fact that both of $f_r$ and $f_b$ are non-increasing would only imply that $b > a$.
%\end{proof}

We point out that by Lemma \ref{lemma:reversechange} the payoff comparison is reduced to counting different sets of users.  This greatly simplifies the process of proving the main theorem of this section.  Below we show that it is impossible to have a finite sequence of asynchronous improvement steps ending in the same color state as it started with.  At the heart of the proof is the repeated use of Lemma \ref{lemma:reversechange} to show that loops cannot form in a sequence of asynchronous updates.

\begin{theorem}\label{thm:fip-2color}
{Every spatial congestion game with only two colors has the finite improvement property.}
\end{theorem}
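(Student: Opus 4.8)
The plan is to argue by contradiction along the lines already set up: assume a nontrivial improvement loop $U=\{u(1),\dots,u(T)\}$ exists, drawn on the circle of Figure~\ref{fig-circle}, and derive an arithmetic impossibility. The first observation is structural: since the system returns to its starting coloring and there are only two colors, each user flips an \emph{even} number of times as we traverse the circle once, and its successive flips necessarily alternate between $s\to\bar{s}$ and $\bar{s}\to s$. Hence every pair of \emph{consecutive} flips of a given user is a reverse-change pair, to which Lemma~\ref{lemma:reversechange} applies directly.

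Next I would turn this whole family of reverse-change inequalities into a single global count. For each user $i$ that flips $2k_i$ times, consider all $2k_i$ circularly-consecutive pairs of its flips. Lemma~\ref{lemma:reversechange} gives $|\mathcal{SS}_{t,t'}^i|-|\mathcal{OO}_{t,t'}^i|\ge 1$ for each such pair (the quantities are integers), and summing over every user and each of its consecutive pairs yields
\begin{equation}
\Phi := \sum_{i}\ \sum_{\text{consec. pairs } (t,t') \text{ of } i}\big(|\mathcal{SS}_{t,t'}^i|-|\mathcal{OO}_{t,t'}^i|\big)\ \ge\ \sum_i 2k_i\ =\ T\ >\ 0 ,
\end{equation}
since the total number of consecutive pairs equals the total number of flips $T$.

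The crux is then to evaluate $\Phi$ a \emph{second} way, edge by edge, and show it is exactly $0$. For an edge $\{i,j\}$ let $a_{ij}(t)=+1$ if $i$ and $j$ share a color at $t^{-}$ and $a_{ij}(t)=-1$ otherwise. Unwinding Definition~\ref{def:SSOO}, a neighbor $j$ contributes to $|\mathcal{SS}_{t,t'}^i|-|\mathcal{OO}_{t,t'}^i|$ exactly $\tfrac12\big(a_{ij}(t)+a_{ij}(t')\big)$: namely $+1$ when $j$ agrees with $i$ before both changes, $-1$ when it disagrees before both, and $0$ otherwise. Summing this over $i$'s consecutive pairs telescopes, because each flip time of $i$ is the endpoint of exactly two consecutive pairs, so $j$'s total contribution to $i$ collapses to $\sum_{t:\,i\text{ flips}} a_{ij}(t)$. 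Collecting terms by edges then gives $\Phi=\sum_{\{i,j\}}\ \sum_{t:\,i\text{ or }j\text{ flips}} a_{ij}(t)$.

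Finally I would observe that along the circle $a_{ij}$ toggles sign at precisely the flips of $i$ or of $j$ (updates are asynchronous, so never simultaneous); therefore the values of $a_{ij}$ sampled just before successive flips on this edge alternate $+1,-1,+1,\dots$, and since each edge sees an even number of flips ($2k_i+2k_j$), its contribution sums to $0$. Hence $\Phi=0$, contradicting $\Phi\ge T>0$, so no such loop exists and the FIP follows. I expect the main obstacle to be exactly this bookkeeping: Lemma~\ref{lemma:reversechange} is purely local and payoff-free, so the real work is to identify the right signed edge-quantity $a_{ij}$ whose telescoping converts the strict per-pair inequalities into a globally cancelling sum; lining up the $\tfrac12\big(a_{ij}(t)+a_{ij}(t')\big)$ identity with the even-flip alternation is where the argument is won or lost.
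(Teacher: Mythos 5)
Your proof is correct, and while it shares the paper's overall skeleton, it replaces the paper's key counting lemma with a genuinely different argument. The paper also proceeds by contradiction, applies the reverse-change inequality (Lemma~\ref{lemma:reversechange}) to the $T$ circularly-consecutive flip pairs, and then invokes a separate combinatorial lemma (Lemma~\ref{claim}, proved in Appendix~\ref{sec:claim}): for each pair of neighbors $A,B$, the two users collectively appear equally often in the $\mathcal{SS}$ sets and the $\mathcal{OO}$ sets. The paper proves that lemma by restricting to the subsequence of $A$'s and $B$'s flips, identifying ``odd trains'' as the only configurations producing a nonzero contribution, and arguing that these contributions alternate between the two sides of the inequalities, with evenness forcing a $50$--$50$ split. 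You reach the same per-edge cancellation, but algebraically: the identity that neighbor $j$ contributes exactly $\tfrac12\bigl(a_{ij}(t)+a_{ij}(t')\bigr)$ to $|\mathcal{SS}_{t,t'}^i|-|\mathcal{OO}_{t,t'}^i|$ treats the $\mathcal{SO}/\mathcal{OS}$ (zero-contribution) cases uniformly, the circular telescoping collapses each neighbor's total contribution to $\sum_{t:\,i\text{ flips}} a_{ij}(t)$, and the toggling-plus-parity observation (each edge sees an even number of asynchronous flips, and $a_{ij}$ alternates sign at exactly those flips) kills each edge sum. This buys you a shorter, more mechanical verification that avoids the odd-train case analysis, at the small cost of introducing the signed bookkeeping; note that your edge-collection step, like the paper's Lemma~\ref{claim}, silently uses the undirectedness of the interference graph (so that $a_{ij}=a_{ji}$ and contributions pair up per edge), which is indeed one of the standing assumptions of the theorem. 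Both arguments rest on the same two parity facts---each user flips an even number of times in a loop, and agreement along an edge toggles at each endpoint flip---so your proof can fairly be read as a cleaner, payoff-free proof of the paper's Lemma~\ref{claim} embedded directly into the main argument.
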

\begin{proof}
%\emph{Proof of Theorem~\ref{thm:fip-2color}}:
We prove this by contradiction.  As illustrated by Figure~\ref{fig-circle}, we consider a sequence of improvement updates that results in the same state. %The existence of such a sequence implies the following facts:
%\begin{enumerate}
%\item each user changes colors an even number of times in this sequence;
%\item each user has an even number of odd trains (an odd train is an odd number of consecutive changes by the same user)
%\end{enumerate}
%These facts are used below.

Consider every two successive color changes, along this circle clockwise starting from time $t=1$, that a user $u(t)$ makes at time $t$ and $t'$ from color $s=s_{u(t)}(t)$ to $\bar{s}$, and then back to $s$, respectively.  Note that this will include the two ``successive'' changes formed by a user's last change and its first change (successive on this circle but not in terms of time).  We have illustrated this in Figure \ref{fig-circle} by connecting a pair of successive color changes using an arrow. It is easy to see that there are altogether $T$ such pairs (or arrows).
%\begin{figure}
%\centering
%\includegraphics[width=2.5in]{circle.eps}
%\caption{Representing an improvement loop on a circle: times of updates $t$ and the updating user $u(t)$ are illustrated. An arrow connects a single user's two consecutive color changes. }
%%\com{Figures 1 and 2: we might need to explain the starting and editing points of each arrow and its physical meaning.}}
%\label{fig-circle}
%\end{figure}

For each arrow in Figure \ref{fig-circle}, or equivalently each pair of successive color changes by the same user, we consider the two sets $\mathcal{SS}_{t,t'}^{u(t)}$ and $\mathcal{OO}_{t,t'}^{u(t)}$ in Definition~\ref{def:SSOO}. Due to the user association, we will also refer to these sets as {\em perceived} by user $u(t)$.  %Note that users in these two sets have made an odd number of changes between $t$ and $t'$.
By Lemma 1, given an updating sequence with the same starting and ending states, we have for each pair of successive reverse changes by the same user, at time $t$ and time $t'$, respectively:
\begin{eqnarray}\label{eqn-inequalities}
|\mathcal{SS}_{t,t'}^{u(t)}| > |\mathcal{OO}_{t,t'}^{u(t)}|, %, \forall u(t)\in U,  ~ s, \bar{s} \in\{r, b\}, ~ {\mbox and}~
~~ t=1, 2, \cdots, T ~.
% ~ s.t. ~ t\neq u(t)_T ~.
\end{eqnarray}
That is, the $\mathcal{SS}$ sets are strictly larger than the $\mathcal{OO}$ sets.  %Note that the above applies to a user's last update since Lemma 1 allows any two changes of the opposite direction.
%\com{For (\ref{eqn-inequalities}), do we want to associate with the user index $u(t)$}?

This gives us a total of $T$ inequalities, one for each update in the sequence and each containing two sets.  Equivalently there is one inequality per arrow illustrated in Figure \ref{fig-circle}.
%As observed earlier, $T$ must be an even number for this to be a loop.  In addition, each user is labeled with (or is the superscript of) an even number of these inequalities.
We next consider how many users are in each of these $2T$ sets (note that by keeping the same ``$>$'' relationship, the $\mathcal{SS}$ sets are always on the LHS of these inequalities and the $\mathcal{OO}$ sets are always on the RHS).  To do this, we will examine users by pairs -- we will take a pair of users and see how many times they appear in each other's sets in these inequalities.  \rev{We will use the following lemma.}
\begin{lemma}
Consider a pair of users $A$ and $B$ in an improvement updating loop, and consider how they are perceived in each other's set.  Then $A$ and $B$ collectively appear the same number of times in the LHS sets (the $\mathcal{SS}$ sets) and in the RHS sets (the $\mathcal{OO}$ sets).
\label{claim}
\end{lemma}

The proof of Lemma~\ref{claim} is given in Appendix~\ref{sec:claim}. 
%In Claim~\ref{claim} below, we show that they collectively appear the same number of times in the LHS sets and in the RHS sets.
\rev{Applying to all users,} Lemma~\ref{claim} implies that these users collectively contribute to an equal number of times to the LHS and RHS of the set of inequalities given in Eqn. (\ref{eqn-inequalities}). Adding up all these inequalities, this translates to the fact that the total size of the sets on the LHS and those on the RHS must be equal.  This however contradicts the strict inequality, thus completing the proof of Theorem \ref{thm:fip-2color}. \end{proof}

Theorem \ref{thm:fip-2color} establishes that when there are only two resources (colors), the FIP property holds, and consequently an NE exists.  %This holds for the general case of user-specific payoff functions.
It turns out that this result does not in general hold when there are 3 or more resources/colors.  A counter-example is provided in the Appendix~\ref{appendix:3res} to illustrate this point.  This also implies that with 3 or more resources/colors,  an exact potential function does not exist for this game, as the FIP is a direct consequence of the existence of a potential function.
%Below we show a counter-example that the FIP property does not necessarily hold for $3$ resources/colors or more.

\subsection{The Finite Improvement Property for Identical Resources for Each User}

{The next theorem shows the second case in which the FIP property holds, \rev{when} all resources are identical to each user, but different users can have different payoff functions. \rev{This} can represent the case where all channels have the same bandwidth and same channel quality to each user (e.g, either with frequency flat fading or with proper channel interleaving such as IEEE 802.16d/e standard \cite{IEEE80216}), \rev{but users may have different channel conditions.}

\begin{theorem}\label{thm:fip-identical-resource}
For a spatial congestion game,% if all resources have identical payoff functions for any given user (i.e., 
if for all $r\in {\cal R}$,  $i\in{\cal I}$, and $n\in \left\{1,\ldots,N\right\}$, we have $g_r^{i}(n)=g^{i}(n)$, %so the payoff function $g^{i}(\cdot)$ may still be user-specific), 
then the game has the finite improvement property.
\end{theorem}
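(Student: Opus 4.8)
The plan is to exhibit a bounded, integer-valued global potential that strictly decreases along every improvement step; finiteness of all improvement paths, and hence the FIP, will then be immediate. The crucial leverage provided by the hypothesis $g_r^{i}(n)=g^{i}(n)$ is that, since user $i$ now faces the \emph{same} non-increasing function on every resource, an improving move is equivalent to a strict reduction in $i$'s own conflict count. Concretely, suppose user $i$ switches from resource $s$ to resource $s'$, and let $k_{\mathrm{old}}$ and $k_{\mathrm{new}}$ denote the number of interfering neighbors (in $\mathcal{K}_i$) using $s$ before the move and using $s'$ after the move, respectively. The move is improving exactly when $g^{i}(k_{\mathrm{new}}+1)>g^{i}(k_{\mathrm{old}}+1)$, and since $g^{i}$ is non-increasing this forces $k_{\mathrm{new}}<k_{\mathrm{old}}$: if instead $k_{\mathrm{new}}\ge k_{\mathrm{old}}$ we would have $g^{i}(k_{\mathrm{new}}+1)\le g^{i}(k_{\mathrm{old}}+1)$, contradicting the strict gain. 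First I would record this equivalence as the central step, noting that it fails once the functions are allowed to differ across resources (which is precisely why this hypothesis is needed, in contrast to Rosenthal's setting).

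Next I would define the potential
\begin{equation}
\Phi(\boldsymbol{\sigma}) = \bigl|\{\,\{i,j\}\in E : \sigma_i=\sigma_j\,\}\bigr|,
\end{equation}
the number of \emph{conflicting} (monochromatic) edges in the undirected interference graph $G=(\mathcal{I},E)$, where $\{i,j\}\in E$ iff $i\in\mathcal{K}_j$ (equivalently $j\in\mathcal{K}_i$, by the undirected assumption). The key computation is that when user $i$ performs the move above, only edges incident to $i$ can change their monochromatic status, since no other user's color changes; the number of such edges that are monochromatic equals precisely $i$'s conflict count, namely $k_{\mathrm{old}}$ before the move and $k_{\mathrm{new}}$ afterward. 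Hence $\Phi$ changes by exactly $k_{\mathrm{new}}-k_{\mathrm{old}}$, which by the first step is strictly negative, and, being a difference of integers, is at most $-1$. Thus every improving move decreases $\Phi$ by at least one. Here I would emphasize that symmetry of the graph is what makes this edge count a coherent global quantity shared by both endpoints; the same argument on a directed graph would not go through, consistent with the directed counterexample referenced earlier.

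Finally, since $\Phi$ takes values in $\{0,1,\dots,|E|\}$ and strictly decreases along every improvement step, any sequence of asynchronous improvement steps has length at most $|E|$ and therefore terminates; its terminal profile is a configuration at which no user can improve, i.e., a pure strategy NE. This establishes the FIP. I expect the only real subtlety, rather than a genuine obstacle, to be the first step: recognizing that the resource-independence hypothesis collapses the per-resource payoff comparison into the purely combinatorial condition ``strictly fewer same-colored neighbors,'' which in turn lets the simple conflict-count potential work. Note that $\Phi$ is only an ordinal (generalized) potential and not an exact one, since the magnitude of the payoff gain $g^{i}(k_{\mathrm{new}}+1)-g^{i}(k_{\mathrm{old}}+1)$ need not equal the drop $k_{\mathrm{old}}-k_{\mathrm{new}}$ in $\Phi$; but matching signs suffice for finiteness and thus for the FIP.
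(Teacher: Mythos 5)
Your proposal is correct and takes essentially the same route as the paper: the paper's potential $\phi(\boldsymbol{\sigma})=\sum_{i,j}\boldsymbol{1}(i\in\mathcal{K}_j)\boldsymbol{1}(\sigma_i=\sigma_j)=\frac{1}{2}\sum_i n^i(\boldsymbol{\sigma})$ is exactly your monochromatic-edge count $\Phi$, used in the same way as a bounded ordinal potential that strictly decreases with every improving move. Your write-up merely makes explicit the steps the paper states tersely (that an improving move forces a strict drop in the mover's conflict count, and that only edges incident to the mover change status).
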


\begin{proof}
We prove this theorem by using a potential function argument.
Recall that user $i$'s total payoff under the strategy profile $\boldsymbol{\sigma}$ is given by 
%\begin{eqnarray}
$g^{i}(\boldsymbol{\sigma}) = g(n^{i}(\boldsymbol{\sigma})+1)$, %~~\textrm{with}~~ 
with $n^{i}(\boldsymbol{\sigma}) = |\{j: \sigma_j = \sigma_i, j\in {\mathcal K}_i\}|$, %~
%\end{eqnarray}
where $\sigma_i \in {\cal R}$, and we have suppressed the subscript $r$ since all resources are identical. % since we have limited our attention to the case where each user can select only one resource at a time.

Now consider the following function defined on the strategy profile space:
\begin{eqnarray}\label{eqn:fip-identical-resource-potential}
\phi(\boldsymbol{\sigma}) = \sum_{i, j \in{\mathcal K}} \boldsymbol{1}(i\in {\mathcal K}_j) \boldsymbol{1}(\sigma_i=\sigma_j) = \frac{1}{2} \sum_{i\in {\mathcal K}} n^{i}(\boldsymbol{\sigma}) ~, 
\end{eqnarray}
where %$G()$ is a non-increasing function, and
the indicator function $\boldsymbol{1}(A)=1$ if $A$ is true and $0$ otherwise.  For a particular strategy profile $\boldsymbol{\sigma}$, this function $\phi$ is the sum of all pairs of users that are connected (neighbors of each other) and have chosen the same resource under this strategy profile.  Viewed in a graph, this function is the total number of edges connecting nodes with the same color.

We see that every time user $i$ improves its payoff by switching from strategy $\sigma_i$ to $\sigma_i^{'}$ and thus reducing $n^{i}(\boldsymbol{\sigma}^{-i}, \sigma_i)$ to $n^{i}(\boldsymbol{\sigma}^{-i}, \sigma_i^{'})$ (as $g^{i}$ is a non-increasing function), the value of $\phi(\cdot)$ strictly decreases accordingly \footnote{It's easy to see that a non-increasing function $G(\sum_{i, j \in{\mathcal K}} \boldsymbol{1}(i\in {\mathcal K}_j) \boldsymbol{1}(\sigma_i=\sigma_j))$ is an ordinal potential function of this game, as its value improves each time a user's individual payoff is improved (which decreases the value of its argument).}.
% also improves, from $\phi(\sigma^{-i}, \sigma_i)$ to $\phi(\sigma^{-i}, \sigma_i^{'})$.  Thus we see function $\phi(\sigma)$ is an ordinal potential function for this game, whose value improves each time a user's individual payoff is improved.
%
%In this case we define a potential function as $V(G,C)= \sigma_{i,j \in {1,\ldots,N}} 1_e (i,j)1_c (i,j)$ where $1_e (i,j)=1$ if and only if nodes $i,j$ are connected(otherwise zero) and  $1_c (i,j)=1$ if and only if nodes $i,j$ have the same color(otherwise zero). In other words value function of a graph for a specific coloring is the number of edges between same color nodes.
%
% It is easy to see that the value function will strictly decrease whenever a node changes it's color so that it will improve it's value.
As this function is bounded from below, %this means that in this case the 
the game has the FIP property, and this process eventually converges to a fixed point which is a Nash Equilibrium.
 \end{proof}

%\rev{We note that the above case is of particular practical interest and relevance, as this case in the context of spectrum sharing translates to evenly dividing a spectrum band into sub-bands, each providing users with the same bandwidth and data rate.  }

\subsection{The Finite Improvement Property for Complete Graphs}

We end this section by stating that an SCG defined over a fully connected graph always has the FIP property: \rev{SCG over a complete graph simply reduces to the standard CG, thus the result}. % This is a direct consequence of known results on the standard CG: in a complete graph every node is every other node's neighbor, therefore a SCG reduces to the original CG, thus this result.}
\begin{theorem}
{When the graph is complete, the associated SCG has the FIP property and thus a NE always exists.}
\end{theorem}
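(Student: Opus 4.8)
The plan is to collapse the spatial, per-user congestion count of the complete-graph SCG onto the ordinary global congestion count, thereby recovering the classical congestion game of Section~\ref{sec-review}, and then to quote its finite improvement property. First I would observe that when the interference graph is complete we have $\mathcal{K}_i = \mathcal{I}\setminus\{i\}$ for every $i$, so that for any profile $\boldsymbol{\sigma}$ and resource $r$ the per-user count is $n_r^i(\boldsymbol{\sigma}) = |\{j\neq i : \sigma_j = r\}|$ and hence $n_r^i(\boldsymbol{\sigma})+1 = n_r(\boldsymbol{\sigma})$, the total number of users on $r$. The key point is that this identity is the same for every user: the congestion experienced on a resource no longer depends on \emph{who} is observing it, only on how many users chose that resource.

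With this identity, user $i$'s payoff for picking $r$ is $g_r^i(n_r(\boldsymbol{\sigma}))$, i.e.\ a payoff driven solely by the total load $n_r$ on the selected resource. This is exactly the single-resource-per-user structure analyzed in Section~\ref{sec-review}. I would then reuse Rosenthal's potential $\phi(\boldsymbol{\sigma})=\sum_{r\in\mathcal{R}}\sum_{k=1}^{n_r(\boldsymbol{\sigma})} g_r(k)$ together with the telescoping computation reproduced there, which shows that any unilateral improving deviation increases $\phi$ by exactly the deviating user's payoff gain; since $\phi$ ranges over finitely many values, every improvement path is finite, yielding the FIP and, with it, the existence of a pure NE.

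The step I expect to be the main obstacle is reconciling the user-specific payoffs $g_r^i$ (our standing assumption) with Rosenthal's potential, which is built from a single resource-indexed function $g_r$. After the reduction the game is a \emph{singleton congestion game with player-specific payoffs}, precisely the model of \cite{milchtaich1996cgp}, for which no Rosenthal-type potential exists in general. The fully correct statement is therefore conditional: when the payoffs are additionally symmetric across users ($g_r^i=g_r$), the reduction lands verbatim in the model of Section~\ref{sec-review} and the FIP holds; but for genuinely user-specific payoffs with three or more resources one should not expect full FIP, since a player-specific singleton congestion game on a complete interference graph can already exhibit improvement cycles (cf.\ \cite{milchtaich1996cgp}), mirroring the three-resource phenomenon of Appendix~\ref{appendix:3res}. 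In that regime the most one can assert is the existence of a pure NE through the player-specific argument of \cite{milchtaich1996cgp}. I would accordingly either restrict the hypothesis to non-user-specific payoffs before invoking Section~\ref{sec-review}, or explicitly weaken the conclusion to pure-NE existence, and make this scope explicit inside the proof.
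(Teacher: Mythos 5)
Your core reduction is exactly the paper's own argument: the paper's entire proof is the single remark that an SCG on a complete graph ``simply reduces to the standard CG,'' i.e., that $\mathcal{K}_i = \mathcal{I}\setminus\{i\}$ forces $n_r^i(\boldsymbol{\sigma})+1 = n_r(\boldsymbol{\sigma})$ for every user $i$, after which the Rosenthal potential argument of Section~\ref{sec-review} delivers the FIP. So at that level you have reproduced the intended proof.

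Where you go beyond the paper is in your caveat, and the caveat is legitimate rather than pedantic. The standing assumption (2) of Section~\ref{sec-problem} makes the payoffs user-specific, and under that assumption the reduced game is \emph{not} the game of Section~\ref{sec-review} but a singleton congestion game with player-specific payoff functions in the sense of \cite{milchtaich1996cgp}. For that class Rosenthal's potential does not exist in general; a pure NE still always exists, but the FIP can genuinely fail once there are three or more resources (Milchtaich exhibits improvement cycles with three players and three resources, and three mutually interfering players are precisely a complete graph). Hence the theorem as stated is correct only if one either reads ``standard CG'' as implicitly restricting to non-user-specific payoffs $g_r^i = g_r$, or weakens the conclusion from FIP to pure-NE existence; the paper's one-sentence proof silently adopts the first reading, in tension with its own standing assumption. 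Your proposal to make this scope explicit---restrict the hypothesis, or downgrade the conclusion and invoke the player-specific existence result---is the right repair, and it is a sharper treatment than the one the paper gives.
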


%This theorem is simply a direct consequence of known results on the standard CG: in a complete graph every node is every other node's neighbor, therefore a SCG reduces to the original CG, thus this result.  %Furthermore, for the same reason when the graph is complete the FIP property also holds.

%s\input{PNE}
%!TEX root = final-draft.tex
%SourceDoc final-draft.tex

\section{Existence of a Pure Strategy Nash Equilibrium}
\label{sec-graph}

{The FIP property guarantees the existence of NE, but such property may not exist in general. In this section, we examine what graph properties or user payoff functions will guarantee the existence of a pure strategy NE in the absence of the FIP property.}
%Results in this section are thus weaker than those in the previous section.

%\rich{c: I changed the paragraph below. I changed the order of the results so the bipartite regular result appears at the end. The reason I did this was because the bipartite regular result only holds when the user's payoff functions are identical for a given resource. I also rearranged the order of the subsections accordingly.}

Specifically, we show that a pure NE always exists for SCGs defined on graphs that {are in the form of a tree or in the form of a loop. We also show the existence of a pure NE when the graph is regular, bipartite, and \rev{payoff functions are non-user specific}.  %Below we present these results in sequence.  %Since non-user specific payoff functions are a subset of user-specific payoff  functions, these existence results also hold for non-user specific payoff functions.
We also give counter examples in Appendices~\ref{appendix:nonmonto} and \ref{appendix:direct} that a pure strategy NE does not generally exist when the payoff functions are non-monotonic or when the network graph is directed. % In the rest of the section, we will use the more generic terms of \emph{players} and \emph{strategies} to mean users and colors in the previous section. }
%

%\com{one more thing is we may want to mention the existence of mixed strategy NE is guaranteed by known results...?}

%In this section we examine what properties on the user payoff functions will guarantee the existence of an NE.  Specifically, we show that for general network graphs, an NE always exists if (1) there is one resource with a dominating payoff function (much larger than the others), or (2) different resources present the same type of payoff for users.  Moreover, in the case of (2) the game has the FIP property.  We note that case (2) is of particular practical interest and relevance, as this case in the context of spectrum sharing translates to evenly dividing a spectrum band into sub-bands, each providing users with the same bandwidth and data rate.  Below we present and prove these results.

{\subsection{Existence of NE on a Tree Graph}}

%For a complete graph this can be proved by defining an exact potential function. Please refer to previous publications for a proof. For case of a tree we can prove this by induction on the number of nodes:
%\begin{theorem}
%When the graph is complete, an NE always exists for the spatial congestion game defined on this graph.
%\end{theorem}
%
%This theorem is trivially true.  It is simply a direct consequence of known results on the standard CG: in a complete graph every node is every other node's neighbor, therefore a SCG reduces to the original CG, thus this result.  Furthermore, for the same reason when the graph is complete the FIP property also holds.

We show that a pure strategy NE exists when the underlying network graph is given by a tree.
We denote by $G_N$ the underlying network (graph) of the $N$-player SCG $\Gamma_N$.
%, where players are indexed by $1,2,...,N$ and
The payoff functions $g_r^i(n_r^i)$ are non-increasing, and $n_r^{i}(\boldsymbol{\sigma})$ denotes the number of neighbors of user/player $i$ (excluding $i$) using strategy $r$. %\rev{The main result of this subsection is proved by induction. We first show the following lemma.}%\com{See my previous comment. If we redefine, we do not need superscript $^{-}$ anymore.}
%Let $BR_i(\sigma_{-i})$ be the best pure strategy response of player $i$ when other players are playing the strategy profile $\sigma_{-i}$.

\begin{lemma}\label{lem:one-link}
If every $N$-player SCG  $\Gamma_N$ has at least one pure strategy NE, then every $(N+1)$-player SCG  $\Gamma_{N+1}$ formed by connecting a new player to \rev{an existing} player in a $N$-player network $G_N$ has at least one pure strategy NE.
\end{lemma}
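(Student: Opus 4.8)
The plan is to prove the lemma by establishing the inductive building block: treat the newly added player $N+1$ (a leaf attached to a single existing node, say player $k$) by \emph{absorbing} its effect into $k$'s payoff function, thereby reducing the $(N+1)$-player game to an $N$-player SCG on the \emph{same} graph $G_N$, to which the hypothesis directly applies. The starting observation is that, since $k$ is the only neighbor of $N+1$, player $N+1$'s payoff depends solely on $\sigma_k$: playing color $c$ yields $g_c^{N+1}(2)$ when $\sigma_k = c$ and $g_c^{N+1}(1)$ otherwise. I would first fix a tie-breaking rule and let $\beta(c)$ denote $N+1$'s best response when $k$ plays $c$. The one structural fact I need about $\beta$ is that the set of colors $c$ with $\beta(c)=c$ (colors at which $N+1$ wants to imitate $k$) contains \emph{at most one} element, namely $N+1$'s favorite solo color $r_0 = \argmax_r g_r^{N+1}(1)$; this follows because $\beta(c)=c$ forces $g_c^{N+1}(2)\ge \max_{r\ne c} g_r^{N+1}(1)$, which together with monotonicity pins $c$ down to a maximizer of $g^{N+1}_{\cdot}(1)$, and ties are resolved in favor of not imitating.

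Next I would define the modified $N$-player game $\Gamma_N'$ on $G_N$, identical to the original except that player $k$'s payoff is replaced by $\tilde g_c^k(x) = g_c^k\big(x + \mathbf{1}(\beta(c)=c)\big)$, which shifts the argument by one exactly on the (single) color where $N+1$ would imitate $k$. For each fixed color $c$ this is still non-increasing in $x$, and the graph is unchanged, so $\Gamma_N'$ is a legitimate $N$-player SCG and, by hypothesis, possesses a pure NE $\boldsymbol{\sigma}^*$. I would then extend it to $\Gamma_{N+1}$ by letting $N+1$ best-respond, $\sigma^*_{N+1} = \beta(\sigma^*_k)$, and claim $(\boldsymbol{\sigma}^*,\sigma^*_{N+1})$ is a pure NE of $\Gamma_{N+1}$. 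Player $N+1$ is best-responding by construction; every player other than $k$ and $N+1$ has exactly the same neighbors, payoff function, and unilateral deviations in $\Gamma_{N+1}$ as in $\Gamma_N'$ (because $N+1$ is adjacent only to $k$), so no such player can profitably deviate.

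The crux is verifying that $k$ itself is best-responding, and this is where the main obstacle lies: the modified game $\Gamma_N'$ \emph{anticipates} that $N+1$ follows $k$ wherever it profitably can, whereas in the genuine equilibrium check $N+1$'s color stays \emph{frozen} at $j^*=\beta(\sigma^*_k)$ while $k$ contemplates a deviation. I would reconcile the two by comparing, for each color $c$, $k$'s true payoff $R(c)=g_c^k(\hat n_c + 1 + \mathbf{1}(c=j^*))$ against its modified payoff $\tilde R(c)=g_c^k(\hat n_c + 1 + \mathbf{1}(\beta(c)=c))$, where $\hat n_c$ counts $k$'s $G_N$-neighbors using $c$ under $\boldsymbol{\sigma}^*$. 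At the equilibrium color $c^*=\sigma^*_k$ the two coincide, since $\mathbf{1}(c^*=j^*)=\mathbf{1}(\beta(c^*)=c^*)$. For a deviation $c\ne c^*$, the at-most-one-imitation-color fact does the work: if $\beta(c)=c$ then $c=r_0$, and because $c^*\ne r_0$ the favorite solo color is available to $N+1$, forcing $j^*=\beta(c^*)=r_0=c$; hence $\mathbf{1}(c=j^*)\ge \mathbf{1}(\beta(c)=c)$ in every case, so monotonicity gives $R(c)\le \tilde R(c)\le \tilde R(c^*)=R(c^*)$. Thus $k$ has no profitable deviation, completing the verification. I expect the delicate point to be nailing down the tie-breaking in the definition of $\beta$ so that the ``unique imitation color'' property holds cleanly and the final inequality goes through in all boundary cases.
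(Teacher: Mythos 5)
Your proof is correct, and it shares the paper's central device: absorb the leaf player $N+1$ into the attachment player's payoff function---shifting the congestion argument by one on a single designated color---so that the hypothesis can be applied to an $N$-player SCG on $G_N$, after which $N+1$ is appended as a best responder. The two arguments differ in how that color is chosen and where the final verification falls. The paper first takes an NE $\boldsymbol{\sigma}$ of the \emph{unmodified} game $\Gamma_N$, lets $N+1$ best respond with some $r_o$, and splits into cases: if $\sigma_j \neq r_o$, or if $j$ still prefers $r_o$ after the attachment, the extended profile is already an NE; only in the conflict case does it build the modified game (shift on $r_o$), and since that game models \emph{exactly} the situation with $N+1$ frozen at $r_o$, players $1,\dots,N$ are satisfied automatically and the only remaining check is that $N+1$ stays content---a short monotonicity argument using $g_{r_o}^{N+1}(2) \geq g_r^{N+1}(1)$. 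You instead determine the shifted color \emph{a priori}: your uniqueness lemma (at most one imitation color, necessarily $r_0 = \argmax_r g_r^{N+1}(1)$, under a tie-break against imitation) lets you define the modified game without reference to any equilibrium and invoke the hypothesis once rather than twice, with $N+1$ content by construction. The price is that your modified game does not exactly model the frozen-leaf game, so the nontrivial check falls on the attachment player $k$, via the inequality $\mathbf{1}(c = j^*) \geq \mathbf{1}(\beta(c)=c)$ for deviations $c \neq c^*$---and that step genuinely needs the tie-breaking machinery: you should state explicitly that your tie-break makes $\beta(c)=c$ equivalent to \emph{strict} dominance ($g_c^{N+1}(2) > g_r^{N+1}(1)$ for all $r\neq c$), since the uniqueness of $r_0$, and hence the forcing step $\beta(c^*)=r_0$, fails with only weak inequalities. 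In short, your route is more self-contained (one invocation of the hypothesis, no case split on an equilibrium of the original game), while the paper's route avoids all tie-breaking subtleties and confines the verification to the single new player; both ultimately rest on the same monotonicity facts.
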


\begin{proof}
By assumption $\Gamma_N$ has a pure strategy NE denoted by $\boldsymbol{\sigma}=\{\sigma_1, \sigma_2, \cdots, \sigma_N\}$.  Suppose $\Gamma_N$ is in such an NE.  Now connect new player $N+1$ to an arbitrary player $j$ in $G_N$.
%Call this player $j$ and the resulting network $G_{N+1}$.
This is illustrated in Figure \ref{fig:single-link}.
\begin{figure}[h]
\centering
\includegraphics[scale=0.7]{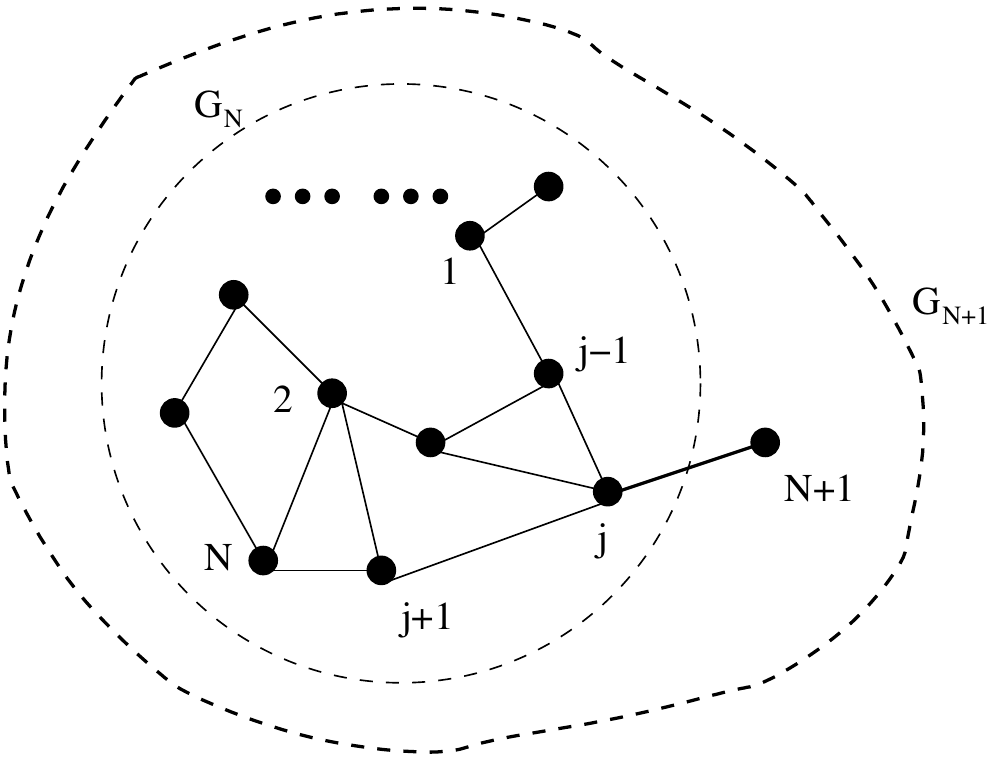}
\caption{Adding one more player to the network $G_N$ with a single link.}
\label{fig:single-link}
\end{figure}

Let player $N+1$ select its best response strategy:
\begin{equation*}
\sigma_{N+1} = r_o = \text{argmax}_{r\in \cal{R}} g_r^{N+1}\left(n_r^{N+1}(\boldsymbol{\sigma})+1\right),
\end{equation*}
where $n_r^{N+1}$ is defined on the extended network $G_{N+1}$, and takes on the value of 1 or 0 depending on whether player $j$ selects strategy $r$ or not.
We now consider three cases depending on $j$'s strategy change in response to the {network expansion} from $G_N$ to $G_{N+1}$. %\com{If I understand correctly, in each of the three cases we need to consider all user $j$ that satisfy the condition (instead of only one particular $j$)? If that is the case, then we need to consider the ``mixed'' scenario, i.e., some user $j$ falls into one case where the other user $j'$ falls into a different case?}

%Assume that while a pure strategy NE of $\Gamma_N$ is being played by players $1,2,...,N$, player $N+1$ is connected to player $j \in \cal{N}$. Then $N+1$ selects channel $r(0)$ s.t.
%\begin{eqnarray*}
%r(0)=\text{argmax}_{r\in \cal{R}} g_r^{N+1}(\widehat{n}_r^{N+1}(\sigma(0))+1),\\
%\text{where } \sigma(0)=(\sigma_1(0), \sigma_2(0),..., \sigma_N(0)),\\
%\text{then } \sigma_{N+1}(0)=r(0).
%\end{eqnarray*}

Case 1: $\sigma_j \neq r_o$.  In this case, player  $N+1$ selected a resource different from $j$'s, so $j$ has no incentive to change its strategy in response to the addition of player $N+1$.  In turn player $N+1$ will remain in $r_o$ as this is its best response, and no other players are affected by this single-link network extension.  Thus the strategy profile $(\sigma_1, \cdots, \sigma_N, r_o)$ is a pure strategy NE for the game $\Gamma_{N+1}$.
%Case 1: If $r(0)\neq \sigma_{j}(0)$ then $\sigma_{j}(1)=BR_{j}(\sigma_{-j}(0), \sigma_{N+1}(0))=\sigma_{j}(0)$. Thus $j$ does not deviate, all other players in $\cal{N}$ play their pure NE strategies in $\Gamma_N$. This results in pure strategy NE of the $\Gamma_{N+1}$. Then the pure NE strategy of $\Gamma_{N+1}$ is $\sigma_{\Gamma_{N+1}}=(\sigma_{\Gamma_{N}}, r(0))$.

Case 2: $\sigma_j = \sigma_{N+1} = r_o$, and player $j$'s best response to the {network expansion}
%with player $N+1$ selecting $r_o$
remains $\sigma_j=r_o$.  That is, even with the additional interfering neighbor $N+1$, the best choice for $j$ remains $r_o$.  In this case again we reach a pure strategy NE for the game $\Gamma_{N+1}$ with the same argument as in Case 1.
%Case 2: If $\sigma_{j}(0)=r(0)$ and $BR_{j}(\sigma_{-j}(0), \sigma_{N+1}(0))=\sigma_{j}(0)$ then pure strategy NE exists with the same argument as in case 1.

Case 3: $\sigma_{j} = \sigma_{N+1} = r_o$, and player $j$'s best response to this {network expansion} is to move away from strategy $r_o$.  In this case more players may in turn change strategies.  Suppose we hold player $(N+1)$'s strategy fixed at $r_o$.  Consider now a new \rev{$N$-player SCG} $\bar\Gamma_N$, defined on the original network $G_{N}$, but with the following modified payoff functions for $r\in{\cal R}$ and $i\in {\cal I}$:
\begin{eqnarray*}
\bar{g}_{r}^{i}(n_r^i + 1) = \left\{\begin{array}{ll}
g_r^{i}(n_r^i+2) & \text{if } i=j, r=r_o \\
g_r^{i}(n_r^i+1) & \text{otherwise}
\end{array}\right. ~.
\end{eqnarray*}
In words, the game $\bar\Gamma_N$ is almost the same as the original game $\Gamma_{N}$, the only difference being that the addition of player $(N+1)$ and its strategy $r_o$ is built into player $j$'s modified payoff function.
%
%Case 3: If $\sigma_{j}(0)=r(0)$ and $BR_{j}(\sigma_{-j}(0), \sigma_{N+1}(0))\neq \sigma_{j}(0)$ then by assumption, there exists another spatial congestion game denoted as $\Gamma_N(1)$ in which the set of players are the same as $\Gamma_N$ and the payoff functions are
%\begin{eqnarray*}
%%\begin{center}
%g(1)_r^i(n(1)_r^i) = \left\{
%\begin{array}{l r}
% g_r^i(n_r^i+1) \text{ if } i\in N_{N+1}, r=\sigma_{N+1}(0),\\
% g_r^i(n_r^i) \text{ else}.\\
%\end{array}
%\right.
%%\end{center}
%\end{eqnarray*}
By assumption of Lemma~\ref{lem:one-link}, this game with $N$ players has a pure strategy NE and we denote that by $\boldsymbol{\bar{\sigma}}$.  Suppose $\boldsymbol{\bar{\sigma}}$ is reached in the network $G_N$ with player $(N+1)$ fixed at $\sigma_{N+1}=r_o$. If we have $\bar\sigma_j = r_o$, then obviously player $(N+1)$ has no incentive to change its strategy because as far as it is concerned its environment has not changed.  In turn no player in $G_N$ will change its strategy because they are already in an NE with player $(N+1)$ held at $r_o$.  If $\bar\sigma_j \neq r_o$, then player $(N+1)$ has even less incentive to change its strategy because %its payoff for using $r_o$ is no worse than before, since 
$j$ moved away from $r_o$ which does not decrease player $N+1$'s payoff on this resource, and at the same time its payoff for using any other resource is no better.  Again $r_o$ is player $(N+1)$'s best response.
In either case,  strategy profile $(\boldsymbol{\bar{\sigma}}, r_o)$ is a new NE for the game $\Gamma_{N+1}$.
%Assume that the PNE of $\Gamma_N(1)$ is s.t. $\sigma_{j}(1)=r(0)$. Then $n'_{r(0)|N+1}(0)=n'_{r(0)|N+1}(1)=1$,
%\begin{eqnarray*}
%r(1)=\text{argmax}_{r\in \cal{R}} g_r^{N+1}(\widehat{n}_r^{N+1}(\sigma(1))+1)=r(0).
%\end{eqnarray*}
%If the pure strategy NE of $\Gamma_N(1)$ is s.t. $\sigma_{j}(1)\neq r(0)$, then $0= \widehat{n}(1)_{r(0)}^{N+1} < \widehat{n}_{r(0)}^{N+1} =1$. Since the payoff functions are nonincreasing and there is only a single neighbour of $N+1$, $N+1$'s best response in this case is again $r(0)$.
%Therefore the pure strategy NE of the game $\Gamma_{N+1}$ exists and the pure NE strategy is given by $\sigma_{\Gamma_{N+1}}=(\sigma_{\Gamma_N(1)}, r(0))$.
\end{proof}

\begin{remark}
Note that in the above lemma, the network $G_N$ itself does not have to be a tree.  The lemma states that as long as an NE exists for {one class of networks}, then by adding one more node through a single link, an NE exists in the new network.
\end{remark}

\begin{theorem}\label{thm:tree}
Any SCG defined over a {\em tree} has at least one pure strategy NE.
\end{theorem}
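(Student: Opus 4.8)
The plan is to prove Theorem~\ref{thm:tree} by induction on the number of players $N$, exploiting the recursive structure of trees together with Lemma~\ref{lem:one-link}. The inductive hypothesis I would adopt is deliberately strong: \emph{every} SCG played on \emph{any} tree with $N$ nodes (with arbitrary non-increasing, user-specific payoff functions) possesses at least one pure strategy NE. Stating the hypothesis over all payoff profiles, rather than for a single fixed game, is the crucial point, for the reason explained below.

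For the base case $N=1$, a single isolated player has no interfering neighbors, so it simply selects $r_o = \argmax_{r\in\mathcal{R}} g_r^1(1)$; this is trivially a pure strategy NE.

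For the inductive step, I would take an arbitrary tree $T_{N+1}$ on $N+1$ nodes. Since every tree with at least one edge has a leaf, I pick a leaf $v$ and delete it together with its unique incident edge; the result $T_N$ is again a tree, now on $N$ nodes. By the inductive hypothesis, every SCG on $T_N$ has a pure strategy NE. The reconstruction of $T_{N+1}$ from $T_N$ is exactly the operation covered by Lemma~\ref{lem:one-link}: reattaching $v$ as a new player joined to its former neighbor through a single link. Applying Lemma~\ref{lem:one-link} with $G_N = T_N$ then yields a pure strategy NE for the $(N+1)$-player game on $T_{N+1}$, closing the induction.

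The step that requires care --- and the reason the inductive hypothesis must quantify over all payoff profiles --- is the invocation of Lemma~\ref{lem:one-link}, read in the sense of the remark following it (as a statement about a class of networks rather than all $N$-player games). Its proof handles Case~3 by constructing a modified game $\bar\Gamma_N$ on the \emph{same} network $T_N$ but with altered payoff functions $\bar g_r^i$ (still non-increasing, being shifts of the original ones), and it needs $\bar\Gamma_N$ to admit an NE. Thus an inductive hypothesis covering only the single ``unmodified'' game on $T_N$ would be insufficient; I would phrase it so that it applies to every SCG on an $N$-node tree, a class automatically closed both under leaf deletion (yielding a smaller tree) and under the payoff modification used in the lemma. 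Once this is in place, the remaining ingredients --- that every nontrivial tree has a leaf and that deleting a leaf preserves treeness --- are routine, and no further obstacle arises.
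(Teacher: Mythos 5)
Your proposal is correct and takes essentially the same route as the paper: an induction on the number of players in which an arbitrary tree on $N+1$ nodes is obtained from a tree on $N$ nodes by attaching a leaf through a single link, with Lemma~\ref{lem:one-link} supplying the inductive step. Your emphasis on quantifying the inductive hypothesis over \emph{all} non-increasing payoff profiles (so that it also covers the modified game $\bar\Gamma_N$ built in Case~3 of the lemma's proof) is precisely what the paper's phrasing ``any $N$-player game $\Gamma_N$ over a tree $G_N$ with any set of non-increasing payoff functions'' accomplishes.
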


\begin{proof}
The proof is easily obtained by noting that any tree can be constructed by starting from a single node and adding one node (connected through a single link) at a time.  Formally, we prove this by induction.   Start with a single player indexed by $1$. This game has a pure strategy NE, in which the player selects $\sigma_1=\text{argmax}_{r\in \cal{R}} g_r^1(1)$ for any payoff functions.  Assume that any $N$-player game $\Gamma_N$ over a tree $G_N$ with any set of non-increasing payoff functions has at least one pure strategy NE.   Any tree $G_{N+1}$ may be constructed by adding one more leaf node to some other tree $G_N$ by connecting it to only one of the players in $G_N$.   Lemma \ref{lem:one-link} guarantees that such a formation will result in a game with at least one pure strategy NE.
\end{proof}

{\subsection{Existence of NE on a Loop}}

\begin{theorem}\label{thm:loop}
Any SCG defined over a {\em loop} network has at least one pure strategy NE.
\end{theorem}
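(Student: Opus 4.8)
The plan is to reduce the loop to a path (a tree) by fixing the strategy of a single node, invoke Theorem~\ref{thm:tree}, and then reinstate the cut node through a fixed-point argument over its color, in the same spirit as Case~3 of Lemma~\ref{lem:one-link}.

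Concretely, label the loop $1 - 2 - \cdots - N - 1$, so that node $1$ is adjacent exactly to $2$ and $N$. For each candidate color $a \in \mathcal{R}$, I would fix $\sigma_1 = a$ and consider the residual game on nodes $\{2, \ldots, N\}$. Since the only cycle passed through node $1$, these nodes now form a path, and the presence of node $1$ enters only through a modified (still non-increasing) payoff for its two neighbors $2$ and $N$. By Theorem~\ref{thm:tree} this path game has a pure NE $\sigma(a) = (\sigma_2^a, \ldots, \sigma_N^a)$; fix one such NE for each $a$. In $\sigma(a)$ every node $2, \ldots, N$ is best-responding, and since their loop-neighborhoods coincide with their path-neighborhoods, they remain best responses in the full loop. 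Hence $(a, \sigma(a))$ is a NE of the loop if and only if node $1$ is also best-responding, i.e. $a \in \argmax_{r} g_r^1(n_r^1(\sigma(a)) + 1)$, where $n_r^1$ counts how many of $\{2, N\}$ use color $r$.

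To produce a consistent $a$, I would select $a^\dagger \in \argmax_{a \in \mathcal{R}} g_a^1(n_a^1(\sigma(a)) + 1)$, the color giving node $1$ the largest payoff across these scenarios, and argue it yields a full NE. If $a^\dagger$ were not a best response, node $1$ would strictly prefer some $b \neq a^\dagger$ against the neighbor colors in $\sigma(a^\dagger)$. The idea is then to compare with the scenario in which node $1$ is fixed at $b$: because node $1$ occupying color $b$ congests $b$ for its neighbors $2$ and $N$, one expects that in $\sigma(b)$ the number of node $1$'s neighbors using $b$ is no larger than in $\sigma(a^\dagger)$; since $g_b^1$ is non-increasing, node $1$'s payoff at $b$ in the $b$-scenario would then be at least its (already higher) payoff against $\sigma(a^\dagger)$, contradicting the maximality of $a^\dagger$. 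A clean special case is immediate: if the color $a^* = \argmax_r g_r^1(1)$ is used by neither neighbor in $\sigma(a^*)$, then $a^*$ delivers node $1$ its best uncongested payoff and is trivially a best response.

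The main obstacle is the comparative-statics step: showing that re-equilibrating the path after node $1$ switches from $a^\dagger$ to $b$ does not increase the count $n_b^1$ of node $1$'s neighbors sitting on $b$. This is delicate because the path NE is not unique and, for three or more colors, the path game need not enjoy the FIP (Theorem~\ref{thm:fip-2color} guarantees it only for two colors), so I cannot simply run best-response dynamics and appeal to convergence. I expect the resolution to exploit two features: the degree-two structure of a loop, so that $n_r^i \in \{0,1,2\}$ and a best response can always avoid a congested color when $|\mathcal{R}| \geq 3$; and a careful, extremal selection of the path NE $\sigma(a)$ (for instance the equilibrium minimizing use of color $b$) so that the monotone substitution effect of the added congestion can be made rigorous. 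Pinning down this monotonicity cleanly is where the real work lies.
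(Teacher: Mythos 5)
Your reduction (cut the loop at node $1$, invoke Theorem~\ref{thm:tree} on the resulting path with modified payoffs for nodes $2$ and $N$) is sound as far as it goes, and it mirrors the paper's own use of this trick. But the proof is incomplete exactly where you say it is, and that missing step is not a technicality --- it is the entire difficulty. The monotonicity claim you need (that some path equilibrium $\sigma(b)$ obtained with node $1$ frozen on color $b$ has $n_b^1(\sigma(b)) \le n_b^1(\sigma(a^\dagger))$) has no obvious proof and is doubtful as stated: the two scenarios are equilibria of \emph{different} games (the congestion increment sits on color $a^\dagger$ in one and on color $b$ in the other), path equilibria are non-unique, and with three or more colors and user-specific payoffs a change at node $2$'s payoff can cascade down the path and flip node $N$'s choice in either direction. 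Choosing an ``extremal'' equilibrium does not rescue this, because the equilibrium sets of the two games need not be comparable; and, as you note, you cannot reach a distinguished equilibrium by best-response dynamics since FIP is only guaranteed for two resources (Theorem~\ref{thm:fip-2color}). So the proposal is a plan with the hard lemma left open, not a proof.

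The paper sidesteps the cross-scenario comparison entirely. After a tie-breaking assumption making best responses unique, it classifies each player $i$ by a type $(a(i),b(i),c(i))$: its best response with no neighbor on $a(i)$, with one neighbor on $a(i)$, and with neighbors on $a(i)$ and $b(i)$. The cut-node-plus-tree argument (your approach, essentially Case~3 of Lemma~\ref{lem:one-link}) is used only in the special case $a(i^*)=b(i^*)$ (Lemma~\ref{lem:loop1}), where after the path re-equilibrates, at most one corrective switch by $i^*$ --- from $a(i^*)$ to $b(i^*)$ --- is needed, and the type structure guarantees this switch cannot destabilize its neighbors (they are not on $b(i^*)$, so their payoffs and incentives are unchanged). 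All remaining cases, where $a(i)\neq b(i)$ for every $i$, are handled constructively: a greedy pass around the loop (Algorithm~\ref{algo}) assigns $\sigma_i = a(i)$ unless that collides with $\sigma_{i-1}$, in which case $\sigma_i = b(i)$, and Lemmas~\ref{lem:algoprop}--\ref{lem:loop3} verify that, after case analysis on the types at the ``seam'' where the loop closes (including one degenerate case that reduces to a two-resource game and invokes Theorem~\ref{thm:fip-2color}), the configuration is, or becomes after one switch, a Nash equilibrium. If you want to complete a proof along your lines, you would need to either prove the comparative-statics lemma (which I suspect requires restrictions you do not have) or replace the $\argmax$-over-scenarios selection with a direct construction of the kind the paper uses.
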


\begin{proof}[Proof (Sketch)]
\rev{The detailed and complete proof of this theorem can be found in Appendix \ref{appendix:loop}.} We begin this proof by assuming that every player \rev{on} the loop always has a unique best response. This will always be the case, unless equalities of the form $g_r ^i (x) = g_{r'} ^i (x')$ cause two resources to be tied as $i$'s best response. {Even in the tie case, we can still get a unique best response by assuming that each user has a preference order among colors when the payoffs are the same.\footnote{For example, a user with a color preference of ``red$>$blue$>$green'' will pick red if the payoffs of choosing red or blue are the same.} In fact, our assumption does not \rev{affect} the validity of the proof, because relaxing it only widens the set of NE a given game on the loop has.}

Under our assumption, we show that every player $i$ can be associated with a triple $(a(i),b(i),c(i)) \in \mathcal{R} ^3$ of possible best responses to different scenarios. The triple has the following properties.

\begin{enumerate}

\item If $i$ has no neighbors playing $a(i)$, then $i$'s best response is $a(i)$, where $a(i) = \argmax _{i \in \mathcal{R}} (g_r ^i (1) )$.

\item If $i$ has one neighbor playing $a(i)$, with the other neighbor not playing $b(i)$, then $i$'s best response is to play $b(i)$.

\item If $i$ has one neighbor playing $a(i)$ and one neighbor playing $b(i)$, then $i$'s best response is $c(i)$.
\end{enumerate}

The main idea of the proof is to show the existence of NE given the existence of players with various kinds of triples.
We start by showing that if there exists a player $i^*$ such that $a(i^*) = b(i^*)$, {then an NE exists.} The way to show this is to hold $i^*$ fixed playing $a(i^*)$ and let the other players alter their strategies freely. Since the other players are essentially playing on a line graph (which is a type of tree graph) we use theorem \ref{thm:tree} to construct a strategy configuration within which each player in $\mathcal{I} - \{ i \}$ employs their best response. We then show that allowing $i^*$ to employ its best response under this configuration constitutes an NE.

Next we show that if no such player $i^*$ exists (so that $a(i) \neq b(i), \forall i$), an NE must also exist. This is done by constructing an algorithm which produces strategy configurations that satisfy many of the players around the loop.
The algorithm begins by assigning player $1$ a strategy $\sigma _1 \in \{ a(1) , b(1) \}$. After this, the algorithm continues to allocate strategies $\sigma _i$ to $i \in \{ 2 , 3 ,..., N \}$ in such a way that $\sigma_i = a(i)$ unless $a(i) = \sigma _{i-1}$ in which case $\sigma _i = b(i)$.
We use this algorithm repeatedly to demonstrate the existence of NE under several cases. The entire set of cases we consider exhausts all the possible games where $a(i) \neq b(i), \forall i$.
%
%\rev{For details of the proof, see Appendix \ref{appendix:loop}.}
%
\end{proof}

%\rich{c:Personally I might omit the subsection below, and add it as a note somewhere}
%\com{I move it to the end of this section; don't know whether there is a better place for it... -mingyan }

{\subsection{Existence of NE on a Regular Bipartite Graph}}

A graph is regular when all its vertices have the same number of connections. A graph is bipartite when its vertices can be colored red and blue {(only two colors)} so that no edge connects a pair of vertices with the same color.
Many well known graphs are regular and bipartite including hypercubes and rectangular lattices.

\begin{figure}[h]
\centering
\includegraphics[scale=0.4]{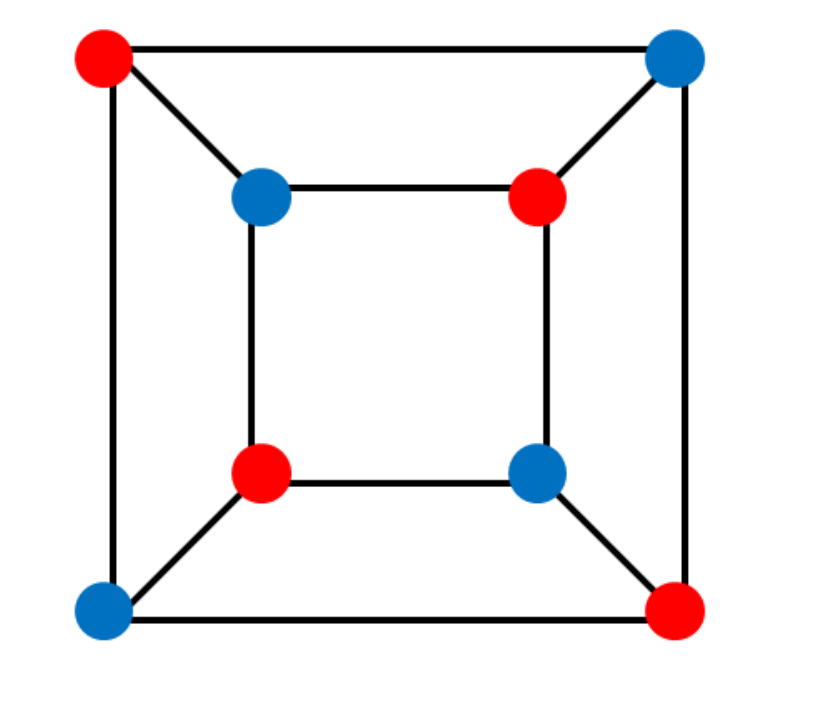}
\caption{The cube graph is regular and bipartite.}
\label{fig:cube}
\end{figure}

\begin{theorem}\label{thm:bipartite}

If the network is regular and bipartite and payoff functions are \rev{non-user specific}, %identical for \rev{all resources}, 
then there always exists a pure strategy Nash equilibrium.

\end{theorem}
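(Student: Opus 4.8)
The plan is to exploit all three hypotheses — regularity, bipartiteness, and non-user-specific payoffs — by writing down two explicit candidate profiles and showing that at least one of them is always a pure NE. Since the payoffs are non-user specific I write $g_r^i = g_r$, and since the graph is $d$-regular every vertex has exactly $d$ neighbors. I would single out two distinguished resources: a solo-best resource $\alpha \in \argmax_{r \in \mathcal{R}} g_r(1)$ and a runner-up $\beta \in \argmax_{r \in \mathcal{R}\setminus\{\alpha\}} g_r(1)$ (the case $|\mathcal{R}|=1$ being trivial, as the single-strategy profile is automatically an NE). Because the graph is bipartite I fix a $2$-coloring of its vertices into independent sets $X$ and $Y$, so that every edge joins $X$ to $Y$. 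The crux of the argument is a dichotomy comparing the payoff of being one of $d+1$ users on the best resource, $g_\alpha(d+1)$, with the payoff of being alone on the runner-up, $g_\beta(1)$.

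In \emph{Case A}, where $g_\alpha(d+1) \ge g_\beta(1)$, I would claim that the uniform profile in which every user plays $\alpha$ is an NE. By regularity each user then has all $d$ of its neighbors on $\alpha$ and earns $g_\alpha(d+1)$; any deviation to $r \ne \alpha$ leaves it with no same-resource neighbor and payoff $g_r(1) \le g_\beta(1) \le g_\alpha(d+1)$, so no deviation helps. (Bipartiteness is not even needed in this case.)

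In \emph{Case B}, where $g_\alpha(d+1) < g_\beta(1)$, I would claim that the split profile assigning $\alpha$ to all of $X$ and $\beta$ to all of $Y$ is an NE. Bipartiteness guarantees that no user shares its resource with a neighbor, so each $X$-user earns $g_\alpha(1)$ and each $Y$-user earns $g_\beta(1)$. For an $X$-user, switching to $\beta$ gives $g_\beta(d+1) \le g_\alpha(1)$ and switching to any other $r$ gives $g_r(1) \le g_\alpha(1)$ since $\alpha$ is solo-best; for a $Y$-user, switching to $\alpha$ gives $g_\alpha(d+1) < g_\beta(1)$ by the Case-B hypothesis, while switching to any $r \ne \alpha$ gives $g_r(1) \le g_\beta(1)$ since $\beta$ is best among resources other than $\alpha$. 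Hence neither side has a profitable deviation, and as the two cases are exhaustive and mutually exclusive, a pure NE always exists.

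The main obstacle — and the reason the two candidate profiles and the case split are needed at all — is the $Y$-side deviation in Case B: a $Y$-user contemplating a switch to the globally best resource $\alpha$ would then face all $d$ of its neighbors there and earn $g_\alpha(d+1)$, which is exactly the quantity the dichotomy is built to bound. It is here that regularity is indispensable, since it makes $g_\alpha(d+1)$ a single uniform threshold rather than a vertex-dependent one; the same threshold simultaneously governs Case A (every vertex piling onto $\alpha$) and Case B, and establishing that this one comparison controls both configurations is the real content. The remaining deviation checks are then routine invocations of the non-increasing property of the $g_r$.
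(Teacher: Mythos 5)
Your proposal is correct and follows essentially the same argument as the paper: order resources by solo payoff, split on whether $g_\alpha(d+1) \geq g_\beta(1)$ (all users on $\alpha$) or $g_\alpha(d+1) < g_\beta(1)$ (bipartite $2$-coloring with $\alpha$ on one side and $\beta$ on the other). Your write-up is in fact slightly more careful than the paper's, since you explicitly check deviations to resources other than the top two and handle $|\mathcal{R}|=1$.
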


\begin{proof}
As payoff functions are not user-specific, we will suppress the superscript $i$ in the function $g_r^{i}(\cdot)$.
Suppose the graph is bipartite and each vertex has {degree $d$ (so $d$ denotes the number of connections each vertex has, e.g., $d=3$  in Fig.~\ref{fig:cube})}.  {Without loss of generality, we order the resources such that the payoff functions satisfy  $g_1 (1) \geq g_2 (1) \geq ... \geq g_R (1)$}. \rev{If $g_r (d+1) \geq g_b (1)$}, then resource $r$ dominates and we can trivially construct an NE by allowing each player to use resource $r$.

Now consider the case where \rev{$g_r (d+1) < g_b (1)$}. Since our graph is bipartite, we may color the vertices with ``colors'' \rev{$r$ and $b$} in such a way that no edge connects a pair of vertices with the same color. We can think of this coloring as a resource allocation $\boldsymbol\sigma$. Under this allocation each employer of \rev{$b$} will receive payoff \rev{$g_b (1)$} (because they have no neighbors employing \rev{$b$}) whereas they \emph{would} get \rev{$g_r (d+1) \leq g_b (1)$} if they played \rev{$r$}, which is no better. So each employer of \rev{$b$} is playing its best response under $\boldsymbol\sigma$. In a similar way, the fact that \rev{$g_r (1) \geq g_b (1) \geq g_r (d+1)$} implies that each employer of \rev{$r$} is playing its best response.
\end{proof}

%%%%%%%%%%%%
{We end this section by noting that regardless of the type of graphs, whenever there is a dominant resource $r$, i.e., its payoff function is such that $g_r^i(K_{d}+1) \geq g_{r^{\prime}}^i(1)$, where $K_d=\max\{|\mathcal{K}_i|, i=1, 2, \cdots, N\}$, for all $r^{\prime} \in \mathcal{R}$ and all $i\in{\cal I}$, then a NE obviously exists where all users share the same dominant resource.
}

%\rev{\subsection{Existence of NE on with a Dominant Resource}}
%
%\rev{
%\begin{theorem}
%For a general network graph, if there exists a resource $r$ and its payoff function is such that $g_r^i(K_{d}+1) \geq g_s^i(1)$, where $K_d=\max\{|\mathcal{K}_i|, i=1, 2, \cdots, N\}$, for all $s \in {1,2,\cdots,R}$ and all $i\in{\cal I}$, then a Nash Equilibrium exists
%\end{theorem}
%\begin{proof}
%
%Here $K_d$ is the maximum node degree in the network, i.e., the maximum possible number of users sharing the same resource.  In words, this theorem says that if there exists a resource whose payoff ``dominates'' all other resources, then an NE exists.  This is a rather trivial result; an obvious NE is when all users share the dominating resource. }
%\end{proof}

%\input{discussion}
%!TEX root = final-draft.tex
%SourceDoc final-draft.tex

\section{Discussion}\label{sec-discussion}
{
While the results derived in this paper present original contributions to the body of knowledge on congestion games, the spatial congestion game has its advantages and limitations as a model in the context of {wireless multi-cahnnel networks}.
%wireless spectrum sharing applications.  In this section we discuss in more details the relevance of the results obtained here and how they apply to our motivating applications of spectrum sharing, as well as possible directions of future studies.
%
In this section we discuss in more details the relevance of the results obtained here as well as possible directions of future studies.

%\com{(By Jianwei): I changed our application context to the generic multichannel networks, and thus removed the discussions about the cons for spectrum sharing.}

%As mentioned earlier, the spatial congestion game arises in a multi-channel wireless system where users are free to pick and choose which channel to use to their advantage.  This game form does not model the presence of a primary user, as is often considered in opportunistic spectrum access studies.  In other words, the applicable scenario is where there is a group of peer cognitive radios/users competing for a common set of channels.

%Within this context, 
Two results obtained in this paper are of particular interest, namely Theorem \ref{thm:fip-2color} and Theorem \ref{thm:fip-identical-resource}.  Theorem \ref{thm:fip-2color} showed that when users are limited to only two channels, the finite improvement property holds over arbitrary graphs with user-specific payoff functions.  Theorem \ref{thm:fip-identical-resource} showed that when channels are of equal width and propagation characteristics for each user (as is the case when a contiguous block of bandwidth is evenly sliced into smaller channels), the finite improvement property holds. This is true  even if the channels are of different quality to different users, e.g., due to the use of different modulation schemes.  This latter scenario is a very realistic one, as this is the case with multiple channels in WiFi (IEEE 802.11b), bluetooth, and so on.  The finite improvement property suggests that in such systems greedy user updates will lead to an NE, which is the local minimizer of the explicit potential function (Eqn (\ref{eqn:fip-identical-resource-potential}) in this case).  %This suggests that the resulting NE (locally) minimizes the interferences among node pairs.
This means that we not only have an easy way of obtaining an NE, but also have a sense of the (local) efficiency of this NE.
%\com{I changed ``global'' to ``local'', which is consistent with the discussion PoA as follows.}

To precisely assess the optimality of an NE, a commonly adopted approach is to \rev{characterize} what's known as the price of anarchy (PoA). The PoA characterizes the ``distance'' between the NE and the social optimal solution of the system. One of the early \rev{results} along this line was \cite{EC1999}. The bounds of
PoA were proven in \cite{CK2005} and \cite{AD2006} for both linear
and polynomial cost functions. {Recent work such as \cite{T2009}
gave the exact PoA for a class of congestion games. It identified a
sufficient condition for an upper-bound} and later showed that the
bound is achievable. Reference \cite{LawHuang2009} computed the exact PoAs for congestion games with player-specific payoff functions in the context of cognitive radio spectrum sharing. None of the existing PoA literature studied the spatial congestion game as we proposed in this paper.

%Also note that in this paper our focus has been on pure strategy NEs.  \com{I don't know if we should try to justify this choice..  do we say that a mixed strategy NE always exists and therefore is not very interesting?? or perhaps we can say this is because a big part of our focus has been on FIP, which is defined for pure strategies?? Am I even correct in saying this?}

%\rich{c: Personally I wouldn't discuss mixed strategies, Nash's theorem always guarantees the existence of NE in such cases, I'm not sure how one defines FIP in mixed cases.}

One limitation of our spatial congestion game model is that it treats all interference relationships equally, i.e., the underlying network graph is unweighted.  In reality the channel quality perceived by a user depends not only on whoelse is using the same channel and can potentially interfere, but also its distances to these interfering users.  One way to address this is to define the congestion game over a weighted network graph, and define the user payoff as a function of the weights on links connecting interfering users who use the same channel. Analysis along this line will be very interesting yet challenging.

Throughout our discussion, we have limited our attention to the case where each user can access one resource/channel at a time.  In reality it's also possible for a user to access multiple channels at a time.
As mentioned earlier, if all users can access all channels simultaneously and the available transmission power is decoupled across the channels, then the resulting congestion game is not particularly interesting, as an obvious NE is where all users use all the resources.
A more interesting case is when users are limited to the number of channels they can access simultaneously.  An additional feature may be that different users have different sets of channels they are allowed to access, i.e., user $i$'s strategy space $\sigma_i \subset 2^{{\cal R}_i}$, where ${\cal R}_i \subset {\cal R}$ is user $i$'s set of allowed channels.  Finally, a user may need to spread the communication resource such as transmission power among multiple channels, thus transmitting over one or multiple channels implies different payoff functions for each channel. All these features will make the resulting game much more complicated and are subjects of future study.
}

\section{Conclusion}
\label{sec-conclusion}

In this paper, we considered an extension to the classical congestion games by allowing resources to be reused among non-interfering users.   {This is a more appropriate model to use in the context of spectrum sharing in multi-channel wireless networks, where spatial reuse is frequently exploited to increase spectrum utilization due to decay of wireless signals distance.}

{The resulting game, spatial congestion game, is a generalization to the original congestion game.  We have shown that the finite improvement property (FIP) holds when there are only two resources or the resources are identical to each user (but may be different between users). The FIP guarantees the existence of a pure strategy NE. %The latter result is very relevant to the application of spectrum sharing, in which case all channels are of the same bandwidth and data rate for each user. 
We also show that a pure strategy NE exists without the FIP if the network can be modeled by a tree graph, a} {regular} {bipartite graph, a loop, or with a dominate resource.}

%Perhaps most relevant to spectrum sharing is the result that when all resources present the same payoff to users (e.g., all channels are of the same bandwidth and data rate for all users), then the game has the finite improvement property and an NE exists.

%\com{I'm not sure if we should keep this statement.  In addition to the results presented here, we believe a pure strategy NE exists in a general undirected graph (i.e., Theorem \ref{thm:tree} holds for a general undirected graph, not just trees).  Unfortunately, a formal proof remains elusive to this point.  We continue to pursue this in our on-going research. }

%!TEX root = final-draft.tex
%SourceDoc final-draft.tex

\appendix

\subsection{Proof of Lemma~\ref{claim}} \label{sec:claim}
%
%\begin{claim}
%Consider a pair of users $A$ and $B$ in an improvement updating loop, and consider how they are perceived in each other's set.  Then $A$ and $B$ collectively appear the same number of times in the LHS sets (the $\mathcal{SS}$ sets) and in the RHS sets (the $\mathcal{OO}$ sets).
%\label{claim}
%\end{claim}
%\emph{Proof of Claim:}
%\begin{proof}
First note that $A$ and $B$ have to be in each other's interference set for them to appear in each other's $\mathcal{SS}$ and $\mathcal{OO}$ sets.  Since we are only looking at two users and how they appear in each other's sets, without loss of generality we can limit our attention to a subsequence of the original updating sequence involving only $A$ and $B$, given by
\begin{eqnarray}
U_{AB} = \{ u(t_1), u(t_2), \cdots, u(t_l)\}
\end{eqnarray}
where $u(t_j) \in \{A, B\}$, $t_j \in\{1, 2, \cdots, T\}$, and $l$ is the length of this subsequence, i.e., the total number of updates between $A$ and $B$.  As before, this subsequence can also be represented clockwise along a circle.

It helps to consider an example of such a sequence, say, $ABAABBABAA$, also shown in Figure \ref{fig-example}.  In what follows we will \rev{refer to an ``odd train''} as the odd number of consecutive changes of one user sandwiched between the other user's changes, e.g., the odd train $ABA$ in the above subsequence. To avoid ambiguity, we will further write this sequence as $A_1 B_2 A_3 A_4 B_5 B_6 A_7 B_8 A_9 A_{10}$.
\begin{figure}[h]
\centering
\includegraphics[width=2.3in]{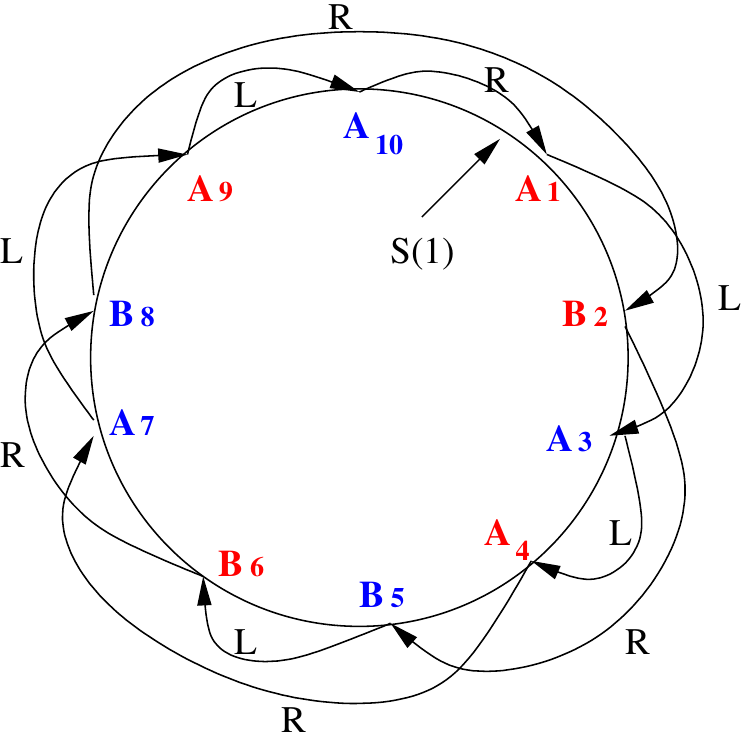}
\caption{Example of an updating sequence ``$A_1 B_2 A_3 A_4 B_5 B_6 A_7 B_8 A_9 A_{10}$'' illustrated on a circle. The color coding denotes the color of a user right before the indicated change.  Each arrow connecting two successive changes by the same user induces an inequality perceived by this user.  The labels ``L'' and ``R'' on an arrow indicate to which side of this inequality (LHS and RHS respectively) the other user contributes to.  As can be seen the labels alternates in each subsequent inequality.}
\label{fig-example}
\end{figure}

A few things to note about such a sequence:
\begin{enumerate}
\item Since the starting and ending states are the same, each user must appear an even number of times in the sequence.
%Since each user appears an even number of times, 
\rev{Consequently} there must be an even number of odd trains along the circle for any user. % Furthermore, when we close the loop on the circle, this fact remains true.  That is, if both ends of this sequence are marked by a train belonging to the same user (e.g., ``A'' and ``AA'' in our running example), then when we close the loop by concatenating these two trains (i.e., ``AAA'' on the circle), we do not change the fact that there are an even number of odd trains by any user (i.e., concatenating one even and one odd trains results in an odd train, and concatenating two odd trains results in an even train).

\item A user (say $A$) only appears in the other's (say $B$'s) $\mathcal{SS}$ or $\mathcal{OO}$ sets if it has an odd train between the other user's two successive appearances.   This means that there is an even number of relevant inequalities where $A$ appears in $B$'s inequalities (either on the LHS or the RHS), and vice versa.

\item Consider the collection of all relevant inequalities discussed above, one for each odd train, in the order of their appearance on the circle (all four such inequalities are illustrated in Figure \ref{fig-example}).  Then $A$ and $B$ contribute to each other's inequalities on alternating sides along this updating sequence/circle.  That is, suppose the first inequality is $A$'s and $B$ goes into its LHS, then in the next inequality (could be either $A$'s or $B$'s) the contribution (either $A$ to $B$'s inequality or $B$ to $A$'s inequality) is on the RHS.  Take our running example, for instance, the first inequality is due to the odd train marked by the sequence $A_1 B_2 A_3$, and the second $B_6 A_7 B_8$.  Suppose $A$ and $B$ start with different colors, then in the first inequality, $B$ appears in the RHS; in the second, $A$ appears in the LHS.
\end{enumerate}

We now explain why the third point above is true.
The reason is because for one user ($B$) to appear in the other's ($A$'s) LHS, they must start by having the same color and again have the same color right before $A$'s second change (see, e.g., the subsequence $A_1 B_2 A_3$ in the running example).  Until the next odd train ($B_6 A_7 B_8$), both will make an even number of changes including $A$'s second change ($A_3 A_4 B_5 B_6$).  The next inequality belongs to the user who makes the last change before the next odd train ($B$).  As perceived by this user ($B$) right before this change, the two must now have different colors.  This is because as just stated $A$ will have made an even number of changes from the last time they are of the same color (by the end of $A_1 B_2$), while $B$ is exactly one change away from an even number of changes (by the end of $A_1 B_2 A_3 A_4 B_5$).  Therefore, the contribution from the other user ($A$) to this inequality must be to the RHS. %\com{One difficulty of understanding this part is that how to partition the running example, and which symbols are used in multiple examples. }

To summarize, one can see that essentially the color relationship between $A$ and $B$ reverses upon each update, and there is an odd number of updates between the starting points of two consecutive odd trains (e.g., 5 updates between $A_1$ and $B_6$, or 1 update between $B_6$ and $A_7$) so the color relationship flips for each inequality in sequence.
%\com{If this paragraph is enough, maybe we can skip the previous paragraph?}

The above argument establishes that as we go down the list of inequalities and count the size of the sets on the LHS vs. that on the RHS, we alternate between the two sides.  Since there are exactly even number of such inequalities, we have established that $A$ and $B$ collectively appear the same number of times in the LHS sets and in the RHS sets.\hfill \qed
%\end{proof}

\subsection{Counter-Example for 3 Resources}\label{appendix:3res}
%The above theorem establishes that when there are only two resources, the FIP property holds, and consequently an NE exists.  This holds for the general case of user-specific payoff functions.  Below we show a counter-
The example below shows that the FIP property does not necessarily hold for when there are $3$ resources/colors.

\begin{example}
Suppose we have three colors to assign, denoted by $r$  (red), $p$ (purple), and $b$ (blue).  Consider a network topology shown in Figure \ref{fig:counter}, where we will primarily focus on nodes $A$, $B$, $C$ and $D$.
%We assume there are four nodes $A,B,C,D$ which the only edges existing between them are $AC , BC , DC$ but
In addition to node $C$, node $A$ is also connected to $A_r$, $A_p$ and $A_b$ nodes of colors red, green and blue, respectively.   $B_r$, $B_p$, $B_b$, $C_r$, $C_p$, $C_b$, and $D_r$, $D_p$, $D_b$ and similarly defined and illustrated in Figure \ref{fig:counter}.
%\com{Since we are just constructing an example here, can we skip the following sentence? ``
Note that these sets may not be disjoint, e.g., a single node may contribute to both $A_r$ and $B_r$, and so on.
\begin{figure}[h]
\centering%
\includegraphics[width=2.3in]{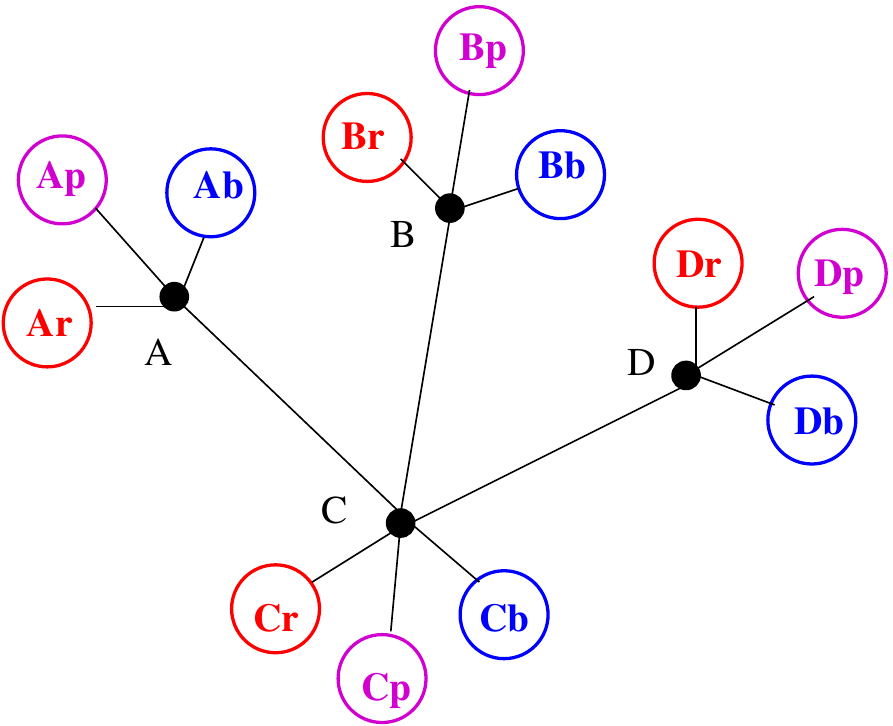}
\caption{A counter example of 3 colors. %nodes $A$, $B$, $C$, and $D$ are connected as shown; in addition, node $W$,  $W\in\{A, B, C, D\}$, is connected to $W_x$ other nodes of color $x\in\{r, p, b\}$ as shown.
}
\label{fig:counter}
\end{figure}

Consider now the sequence of improvement updates shown in Table \ref{tbl:counter} involving only nodes $A$, $B$, $C$, and $D$, i.e., within this sequence none of the other nodes change color (note that this is possible in an asynchronous improvement path), where the notation $s_1\rightarrow s_2$ denotes a color change from $s_1$ to $s_2$. At time $0$, the initial color assignment is given. % The sequence of updates are shown in Table \ref{tbl:counter}

%\medskip
%\begin{center}
\begin{table}[h]\label{tbl:counter}
\begin{center}
\begin{tabular}{|r|c|c|c|c|}
\hline
time step & $A$ & $B$ & $C$ & $D$ \\
\hline
0 & b 			 & p & p & b \\
\hline
1 & b $\rightarrow$ r &    &     &  \\
\hline
2 &                            & p $\rightarrow$ r & & \\
\hline
3 &                            &                            &                   & b $\rightarrow$ r \\
\hline
4 & 				& 				& p $\rightarrow$ r &                 \\
\hline
5 & r $\rightarrow$ p &  & & \\
\hline
6 & 				& 				& 				& r $\rightarrow$ b \\
\hline
7 & 				& r $\rightarrow$ b & & \\
\hline
8 & 				& 				& r $\rightarrow$ b & \\
\hline
9 & p $\rightarrow$ b & & & \\
\hline
10& 				& 				& b $\rightarrow$ p & \\
\hline
11& 				& b $\rightarrow$ p & & \\
\hline
\end{tabular}\medskip
\caption{3-color counter example.}
\end{center}
\end{table}

\medskip
We see that this sequence of color changes form a loop, i.e., all nodes return to the same color they had when the loop started. If we can show that such loop is feasible, then we have found an counter example.
%
%the following sequence of color changes would make an improvement loop:
%node A changes into $r,t,s$ (in this order) at times 1,5,9 . Node B changes into $r,s,t$ (in this order) at times 2,7,11. Node C changes into $r,s,t$ (in this order) at times 4,8,10 and finally ode D changes into $r,s$ at times 3,6.
%
%each of these nodes are connected to respectively $A_r,B_r,C_r,D_r$ nodes of color red , $A_b,B_b,C_b,D_b$ nodes of color blue and $A_g,B_g,C_g,D_g$ nodes of color green (nodes outside these $4$ nodes) that we assume in the course of color improvement changes those nodes do not change their colors
%(note that this is possible in an asynchronous improvement path, i.e., we are trying to construct a loop on this path.  What matter is a loop exists; we do not need all nodes to be involved in this loop).
%
For this to be an improvement loop such that each color change results in an improved payoff, it suffices for the following sets of conditions to hold. Here we assume all users have the same payoff function and have suppressed the superscript $i$ in $g_r^{i}(\cdot)$, and the notation ``$>_{k}$'' denotes that the improvement occurs at time $k$.
%We assume that initially node A is colored $s$, node B is colored $t$, node C is colored $t$ and node D is colored $s$.(we assume three colors are $s,t,r$). Now we can see that in case that
\begin{eqnarray*}
&& g_r(A_r+1) >_1 g_b(A_b+1) > g_b(A_b+2) \\
&& >_{9}  g_p(A_p+1) >_{5} g_r(A_r+2) ~; \\
&& g_r(B_r+1) >_{2} g_p(B_p+2) >_{11} g_b(B_b+1)\\
&&>_{7} g_r(B_r+2) ~; \\
&& g_b(C_b+3) >_{8} g_r(C_r+1) > g_r(C_r+4) \\
&& >_{4} g_p(C_p+1) >_{10} g_b(C_b+4) ~; \\
&& g_r(D_r+1) >_{3} g_b(D_b+1) >_{6} g_r(D_r+2)
\end{eqnarray*}
%\com{Would it be better if we illustrate on each of the equality sign which time slot it corresponds to?}
It is straightforward to verify the sufficiency of these conditions by following a node's sequence of changes.

To complete this counter example, it remains to show that the above set of inequalities are feasible given appropriate choices of $A_x$, $B_x$, $C_x$ and $D_x$, $x\in\{r, p, b\}$.  There are many such choices; one example is $A_x = 5,  B_x = 3,  C_x = 7,  D_x = 1$, for all $x \in\{ r, p, b\}$.
%\begin{eqnarray*}
%A_x = 5; ~~~ B_x = 3; ~~~ C_x = 7; ~~~ D_x = 1, ~~~ x \in\{ r, p, b\}
%\end{eqnarray*}
With such a choice, and substituting them into the earlier set of inequalities and through proper reordering, we obtain the following single chain of inequalities:
\begin{eqnarray*}
&& g_r(2) > g_b(2) > g_r(3)
> g_r(4) > g_p(5) > g_b(4) \\
&>& g_r(5) > g_r(6) 
>  g_b(6) > g_b(7) > g_p(6) > g_r(7) \\
&>& g_b(10) > g_r(8) > g_r(11) > g_p(8)>g_b(11)
\end{eqnarray*}
It should be obvious that this chain of inequalities can be easily satisfied by the right choices of non-increasing payoff functions.
%
%In order to make sure that above inequalities are consistent with each other we would need to choose $A_r,B_r,C_r,D_r,A_b,B_b,C_b,D_b,A_g,B_g,C_g,D_g$ so that none of the values inside the above parentheses are the same.
\end{example}

It is easy to see how if we have more than 3 colors, this loop will still be an improving loop as long as the above inequalities hold. This means that for 3 colors or more the FIP property does not hold in general.  Note that the updates in this example are not always best response updates;  {they can be better responses which still result in payoff improvements.}
%It is also obvious from this example that potential and semi-potential functions don't always exist for cases with 3 colors or more.
%\begin{figure}[htb!]
%\centering
%\includegraphics[width=7cm]{counter-example.eps}
%\caption{A counter example of 3 colors: nodes $A$, $B$, $C$, and $D$ are connected as shown; in addition, node $W$,  $W\in\{A, B, C, D\}$, is connected to $W_x$ other nodes of color $x\in\{r, p, b\}$. }
%\label{fig:counter}
%\end{figure}

\subsection{Counter Example of Non-monotonic Payoff Functions}\label{appendix:nonmonto}

Below we show that a pure strategy NE may not exist when the network graph is undirected but the payoff function is non-monotonic, even when they are non-user specific.

\begin{example}
Consider a 3-user, 2-resource network given in Figure \ref{fig:PNE4}. The payoff functions have the following property
\begin{eqnarray*}
g_2(2)>g_1(2)>g_2(1)> g_1(3)> g_1(1)> g_2(3) ~.
\end{eqnarray*}
One example of this is when $g_1(1)=2, g_1(2)=5, g_1(3)=3, g_2(1)=4, g_2(2)=6$, and $ g_2(3)=1$. The game matrix corresponding to these payoff functions are given below.  It is easy to verify that there exists no pure strategy NE.
\begin{figure}[h]
\centering
\includegraphics[width=1.2in]{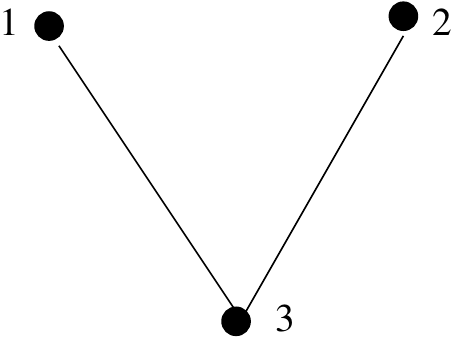}
\caption{Counter example of non-monotonic payoff functions}
\label{fig:PNE4}
\end{figure}

\medskip
\begin{center}
%\begin{table}\label{tbl:counter}
\begin{tabular}{|r|c|c|c|c|}
\hline
User 3 \slash User 1,2 & $(1,1)$ & $(1,2)$ & $(2,1)$ & $(2,2)$ \\
\hline
1 & 5, 5, 3 & 5, 4, 5 & 4, 5, 5 & 4, 4, 2 \\
\hline
2 & 2, 2, 4 & 2, 6, 6 & 6, 2, 6 & 6, 6, 1  \\
\hline
\end{tabular}
%\caption{3-color counter example.}
%\end{table}
\end{center}
\end{example}

%%%%%%%%%%%%
\subsection{Counter Example of a Directed Graph}\label{appendix:direct}
Below we show that a pure strategy NE may not exist when the network graph is directed.

\begin{example}
Consider a 4-user, 3-resource network given in Figure \ref{fig:PNE3}. It can be shown that a pure strategy NE does not exist when the payoff functions are non-increasing and have the following property.
\begin{eqnarray*}
&& g_3(1)> g_2(1)> g_2(2)> g_3(2)> g_1(1)  > g_1(2) \\
&>& g_2(3)> g_1(3)> g_2(4)> g_1(4)> g_3(3)> g_3(4) ~.
\end{eqnarray*}
\begin{figure}[h]
\centering
\includegraphics[width=1.4in]{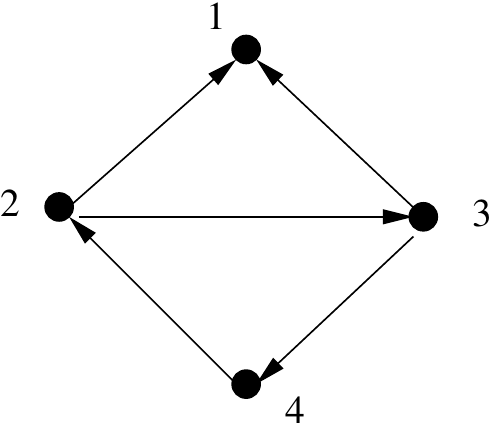}
\caption{Counter-example for directed graphs}
\label{fig:PNE3}
\end{figure}

We do not include an example game matrix for brevity.  We invite an interested reader to verify this example.
\end{example}

\subsection{Proof of Theorem \ref{thm:loop}}\label{appendix:loop}
%
%\rich{c: Heres the big proof of the loop theorem, it should perhaps be lifted on to a techical report -but take care with the references to results in the main paper}

Here we give a constructive proof that  {every \rev{SCG} on a loop graph has a NE}.

Suppose our set of players $\mathcal{I} = \{ 0 , 1 ,..., N-1 \}$ are labeled from $0$ to $N-1$. The interference graph forms a loop so that each $i \in \mathcal{I}$ has interference set $\mathcal{K} _i = \{ i-1 , i+1 \}$, where addition is performed modulo $N$ (throughout this proof). Suppose we have resource set $\mathcal{R} = \{ 1 , 2 ,..., R \}$ and each player $i$ is associated with non-increasing payoff functions $g_1 ^i , g_2 ^i, ..., g_R ^i$ overall resources $1$ to $R$.

A strategy allocation $\boldsymbol\sigma = ( \sigma _0, \sigma _1,...,\sigma _{N-1} )$ is an assignment of one resource (strategy) $\sigma _i \in \mathcal{R}$ to each $i \in \mathcal{I}$.
A player $i$ will select the strategy $i$ that maximize its payoff $g_r ^i ( 1 + |\{ j \in \mathcal{K}_i : \sigma_j = r \}|)$.
For a multi-set $S$ of elements from $\mathcal{R}$, let us define $\beta ^i (S) = \argmax_{r \in \mathcal{R}}  g_r ^i \left( 1 + |S \cap \{r \}|\right) $ to be the set of best responses that player $i$ has when $S$ is the multi-set of strategies allocated to its neighbors.

In this proof, we are concerned with games on loop graphs where each player has two neighbors. This means we are concerned with the values of $\beta ^i ( \{ a , b \} )$ with $a,b \in \mathcal{R}$.\footnote{Note that $\{ a , a \} \neq \{ a \}$ because we are discussing multi-sets.} Recall that a strategy allocation $\boldsymbol\sigma$ is a Nash equilibrium (NE) if and only if $\sigma _i = \beta ^i ( \{ \sigma _{i-1} , \sigma _{i+1} \} )$, $\forall i \in \mathcal{I}$.

Often $\beta ^i (S)$ will be a singleton (a single well defined best response). This will happen except when equalities of the form $g_r ^i (x) = g_{r'} ^i (x')$ cause multiple strategies/resources to be tied as best responses. In this proof we shall assume that such equalities do not occur so we can think of $\beta ^i$ as a map from $\mathcal{R} ^2$ to $\mathcal{R}$. We will show the existence of Nash equilibria under this restriction. Relaxing the restriction will only increase the number of options players have as their best responses and will hence maintain previously found Nash equilibria.

Now we shall examine the possible \emph{types} of players (i.e., the different forms that best response function $\beta ^i$ can take for different players).

For $i \in \mathcal{I}$ let $(a(i) , b(i) , c(i) )$ denote the $\emph{type}$ of player $i$ where: 
\begin{enumerate}
\item $a(i) = \beta ^i ( \emptyset ) = \argmax _{r \in \mathcal{R}} (g_r ^i (1))$,

\item  $b(i) = \beta ^i ( \{ a(i) \} )$

\item   $c(i) = \beta ^i ( \{ a(i) , b(i) \} )$.
\end{enumerate}

The way that a type $(a(i) , b(i) , c(i) )$ player's best responses depend upon their surroundings can be summarized as follows:

\begin{enumerate}
\item If $i$ has no neighbors employing $a(i)$ (as in $a(i) \notin \{ \sigma _{i-1} , \sigma _{i+1} \}$), then $i$'s best response is $a(i)$.

\item If $i$ has one neighbor playing $a(i)$ whilst its other neighbor is \emph{not} playing $b(i)$, then $i$'s best response is $b(i)$.

\item If $i$ has one neighbor playing $a(i)$ whilst its other neighbor plays $b(i)$, then $i$'s best response is $c(i)$.
\end{enumerate}

Player $i$'s best response will always be $a(i), b(i)$ or $c(i)$. Note that these values may be equal to one another.
Our method of proof is to show how to construct NE given the existence of players with various types.

\begin{lemma} \label{lem:loop1}
If there exists a player $i^*$ such that $a(i^*) = b(i^*)$, then there exists an NE.
\end{lemma}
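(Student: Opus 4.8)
The plan is to reduce the loop to a tree by pinning the special player $i^*$ to the strategy $a(i^*)$, solving the resulting tree game via Theorem~\ref{thm:tree}, and then showing that reinstating $i^*$'s best response yields a Nash equilibrium. Concretely, I would delete $i^*$ together with its two incident edges from the loop; the surviving players $\mathcal{I}\setminus\{i^*\}$ form the path $i^*{+}1, i^*{+}2, \ldots, i^*{-}1$ (indices mod $N$), which is a tree. To encode that $i^*$ is held at $a(i^*)$, I would modify the payoff functions of its two former neighbors $i^*{-}1$ and $i^*{+}1$ exactly as in the proof of Lemma~\ref{lem:one-link}: replace $g^{i^*\pm1}_{a(i^*)}(\cdot)$ by $g^{i^*\pm1}_{a(i^*)}(\cdot+1)$ and leave every other payoff untouched. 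This modification preserves the non-increasing property, so the path game is a legitimate SCG over a tree and, by Theorem~\ref{thm:tree}, admits a pure-strategy NE $\boldsymbol{\bar\sigma}$ in which every player $j\neq i^*$ best-responds to its loop neighborhood under the assumption that $i^*$ plays $a(i^*)$.

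Next I would assemble the full loop profile by letting $i^*$ play its genuine best response $\sigma^*_{i^*}=\beta^{i^*}(\{\bar\sigma_{i^*-1},\bar\sigma_{i^*+1}\})$, and argue this is an NE. Players not adjacent to $i^*$ are unaffected and still best-respond, and $i^*$ best-responds by construction, so only the two neighbors $i^*\pm1$ need checking. I would split on how many of $i^*$'s neighbors play $a(i^*)$ in $\boldsymbol{\bar\sigma}$. If neither does, then $\sigma^*_{i^*}=a(i^*)$; if exactly one does, then the other neighbor plays some $x\neq a(i^*)=b(i^*)$, so the second best-response rule applies and again $\sigma^*_{i^*}=b(i^*)=a(i^*)$. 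In both subcases $i^*$ does not move from $a(i^*)$, so nothing has changed relative to $\boldsymbol{\bar\sigma}$ and the profile is immediately an NE. This is precisely where the hypothesis $a(i^*)=b(i^*)$ enters: it guarantees $i^*$ is content with $a(i^*)$ whenever at most one neighbor crowds that resource.

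The main obstacle is the remaining subcase, where \emph{both} neighbors play $a(i^*)$; then the third rule gives $\sigma^*_{i^*}=c(i^*)$, which may differ from $a(i^*)$, so $i^*$ actually moves and I must show the neighbors do not want to follow. The key is a monotonicity argument: when $i^*$ switches from $a(i^*)$ to $c(i^*)$, each neighbor $i^*\pm1$ sees the number of its neighbors on $a(i^*)$ drop by one and the number on $c(i^*)$ rise by one, while all other resources are unaffected; by the non-increasing assumption the payoff of $a(i^*)$ does not decrease and that of $c(i^*)$ does not increase. Since each neighbor was already best-responding with $a(i^*)$ in $\boldsymbol{\bar\sigma}$, and $a(i^*)$ has only improved relative to every alternative, it remains each neighbor's best response, so they stay put and no cascade occurs. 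Combining all subcases, the constructed profile is an NE in every case. I expect the only delicate point to be the bookkeeping in this last subcase — verifying that the payoff changes point in the right direction under the non-increasing assumption — while the reduction to Theorem~\ref{thm:tree} via the Lemma~\ref{lem:one-link} payoff modification is routine.
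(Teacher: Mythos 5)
Your proposal is correct and follows essentially the same route as the paper's own proof: pin $i^*$ at $a(i^*)$, solve the induced line (tree) game under the Lemma~\ref{lem:one-link}-style payoff modification via Theorem~\ref{thm:tree}, and then check player $i^*$ by cases on how many of its neighbors occupy $a(i^*)$, using monotonicity to show the neighbors stay put when $i^*$ moves. Your handling of the final subcase (both neighbors on $a(i^*)$, with $i^*$ switching to $c(i^*)$) is in fact slightly cleaner than the paper's text, which writes $b(i^*)$ where $c(i^*)$ is evidently intended.
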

\begin{proof}
Hold player $i^*$ fixed playing strategy $a(i^*)$ and allow the other players to evolve. In this scenario the remaining players $\mathcal{I} - \{ i^* \}$ are connected up in a line. The way these players evolve is described by a game with modified payoff functions $f_r ^i$.
These functions are defined so that $\forall i \in \mathcal{I} - \{ i^* \}$, $\forall r \in \mathcal{R}$, and $\forall x$, we have $f^i _r (x) = g^i _r (x)$, unless $r = a(i^*)$ and $i \in \{ i^* - 1 , i^* +1 \}$, in which case $f^i _r (x) = g^i _r (x+1)$.
Consider this modified game evolving upon the line graph induced upon the players $\mathcal{I} - \{ i^* \}$. Since the line is a tree graph, we can be assured (by theorem \ref{thm:tree}) that this system has an NE $\boldsymbol\sigma_{ - i^*}$.

Now let us reconsider player $i^*$ in the original system on the loop with payoff functions $g^i _r$. Suppose we set $\sigma _{i^*} = a(i^*)$ and allow the other players in $\mathcal{I} - \{ i^* \}$ to keep the strategies allocated to them under $\boldsymbol\sigma_{ - i^*}$.
Each player in $\mathcal{I} - \{ i^* \}$ will still be employing their best response in this configuration (because the modified system within which they reached this setup was essentially the same as the original setup with $i^*$'s strategy held fixed).
\begin{itemize}
\item If $i^*$ is such that $a(i^*) = b(i^*) = c(i^*)$, then $a(i^*)$ is always $i^*$'s best response and $i^*$ will also be satisfied under the configuration $\boldsymbol\sigma$ (which hence must be an NE).
\item If $\sigma _{i^* - 1} \neq a(i^*)$ or $\sigma _{i^* + 1} \neq a(i^*)$, then again $i^*$ is playing their best response and  $\boldsymbol\sigma$ is an NE.
\item If $\sigma _{i^* - 1} = a(i^*)  = \sigma _{i^* + 1 } \neq b(i^*) $, then $i^*$'s best response will be to change to employ $b(i^*)$. When $i^*$ switches its strategy in this way it will not decrease they payoff of its neighbors $i^* -1$ and $i^* + 1$ (which are not employing $b(i^*)$), nor will $i^*$'s change increase the incentive for $i^* -1$ or $i^* + 1$ to change to use a different strategy. It follows that once $i^*$ has switched its strategy to $b(i^*)$ the system will be in NE.
\end{itemize}
\end{proof}

To deal with the remaining cases, we \rev{will use the} algorithm defined below. \rev{It} takes in a value $\alpha \in \{0,1 \}$ and returns a strategy allocation $\boldsymbol\sigma$ to the players.

\begin{algorithm}\label{algo} \ \\
 If $\alpha = 0$ then $\sigma _0 := a(0)$, otherwise $\sigma _0 := b(0)$.\\
For $i$ from $1$ to $N-1$ do\\
If $a(i) = \sigma _ {i-1}$ then $\sigma _ {i} := b(i)$, otherwise $\sigma _ {i} := a(i)$.\\
end do.\\
Return($\boldsymbol\sigma$)\\
End
\end{algorithm}

Let $A(\alpha) = \boldsymbol\sigma$ denote the strategy allocation produced by Algorithm \ref{algo}.

\begin{lemma} \label{lem:algoprop}
Suppose $a(i) \neq b(i), \forall i \in \mathcal{I}$,
in this case the strategy allocation $A(\alpha) = \boldsymbol\sigma$ produced by Algorithm \ref{algo} has the following properties;

\begin{enumerate}
\item $\forall i \in \{0 , 1 ,.., N-2 \}$, we have $\sigma _i \neq \sigma _{i+1}$.

\item {$\sigma_{i-1} \neq \sigma_i \neq \sigma_{i+1}$ implies that $i$ is employing its best response under $\boldsymbol\sigma$, except in the case where $\alpha = 1$ \emph{and} $i=0$.} 

\item $\forall i \in \{1 , 2 ,.., N-2 \}$, player $i$ is employing its best response.

\item If $\alpha = 0$ and $\sigma _{N-1} \neq \sigma _{0}$, then the system is in NE.

 \item If we \emph{change} the strategy of player $i \in \{ 2 , 3 ,..., N-1 \}$ to some $r \neq \sigma_{i-1}$, then player $i-1$ will still be playing its best response in the resulting configuration.
\end{enumerate}
\end{lemma}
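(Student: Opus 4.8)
The plan is to establish the five properties in order, since each later one leans on the earlier ones, with essentially all the work concentrated in Property~2. Property~1 falls straight out of the construction: for every $i \geq 1$, Algorithm~\ref{algo} sets $\sigma_i = b(i)$ exactly when $a(i) = \sigma_{i-1}$ and sets $\sigma_i = a(i)$ otherwise, so in the first branch $\sigma_i = b(i) \neq a(i) = \sigma_{i-1}$ by the hypothesis $a(i) \neq b(i)$, while in the second branch $\sigma_i = a(i) \neq \sigma_{i-1}$ by the branch condition itself; either way $\sigma_i \neq \sigma_{i-1}$.

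Property~2 is the heart of the lemma, and I would prove it by splitting on whether $\sigma_i = a(i)$ or $\sigma_i = b(i)$. In the first case the hypothesis $\sigma_{i-1} \neq \sigma_i \neq \sigma_{i+1}$ already says that neither neighbour plays $a(i)$, so the best-response rule for an unoccupied $a(i)$ makes $a(i) = \sigma_i$ the best response with no further input. In the second case I would read off from the construction that, for $i \geq 1$, the assignment $\sigma_i = b(i)$ is taken precisely when $\sigma_{i-1} = a(i)$, so the $i-1$ neighbour plays $a(i)$; together with $\sigma_{i+1} \neq \sigma_i = b(i)$ (the other neighbour avoids $b(i)$) the best-response rule yields $b(i) = \sigma_i$. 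The one place this reasoning has no foothold is the forced assignment $\sigma_0 = b(0)$ when $\alpha = 1$, where no construction step guarantees that a neighbour of player~$0$ plays $a(0)$; that is exactly the stated exception. I expect this to be the main obstacle, chiefly because of the bookkeeping around the cyclic indices — the neighbour ``$i-1$'' is the one the algorithm uses to define $\sigma_i$, with wrap-arounds $i+1 \equiv 0$ at $i = N-1$ and $i-1 \equiv N-1$ at $i = 0$ — and because pinning the failure down to precisely $\alpha = 1$, $i = 0$ requires care.

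With Property~2 available the rest is short. Property~3 is immediate: for $i \in \{1,\dots,N-2\}$ Property~1 gives both $\sigma_{i-1} \neq \sigma_i$ and $\sigma_i \neq \sigma_{i+1}$, and since $i \neq 0$ the exception cannot trigger, so player~$i$ is at its best response. For Property~4 I would fix $\alpha = 0$ together with $\sigma_{N-1} \neq \sigma_0$; Property~3 already satisfies players $1,\dots,N-2$, and for each boundary player $0$ and $N-1$ the relevant triple obeys $\sigma_{i-1} \neq \sigma_i \neq \sigma_{i+1}$ by combining Property~1 with $\sigma_{N-1} \neq \sigma_0$, so Property~2 applies (the $i = 0$ case being admissible because $\alpha = 0$), whence all $N$ players are satisfied and $\boldsymbol\sigma$ is an NE. Finally, Property~5 is a local re-run of the Property~2 case analysis for player~$i-1$ after the perturbation: changing $\sigma_i$ to any $r \neq \sigma_{i-1}$ leaves the $i-2$ neighbour's relation to $\sigma_{i-1}$ untouched and keeps the $i$ neighbour different from $\sigma_{i-1}$ (in particular different from $b(i-1)$ when $\sigma_{i-1} = b(i-1)$), so the same two cases show player~$i-1$ remains at its best response.
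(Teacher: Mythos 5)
Your proposal is correct and follows essentially the same approach as the paper's proof: property 1 from the algorithm's branch structure together with $a(i) \neq b(i)$, property 2 by case analysis on $\sigma_i \in \{a(i), b(i)\}$ (with the $\alpha = 1$, $i = 0$ exception isolated exactly as in the paper), properties 3 and 4 as immediate consequences, and property 5 by re-running the property-2 analysis after the perturbation. The only cosmetic difference is that in the $\sigma_i = b(i)$ case you invoke the stated best-response rule directly, whereas the paper argues by eliminating $a(i)$ and $c(i)$ as candidate best responses; these are equivalent.
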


\begin{proof}
(1) follows from our assumption that $a(i) \neq b(i), \forall i$.

To see (2), note that if $\sigma_i = a(i)$, then $i$ is employing the best response to its surroundings because it has no neighbors employing the same strategy.
If $\sigma_i = b(i)$, then $i>0$ (by our assumption) and the nature of Algorithm \ref{algo} implies $\sigma _{i-1} = a(i)$. This means $i$'s best response is not $a(i)$. Also, supposing $c(i) \neq b(i)$, we can see that $i$'s best response is \emph{not} $c(i)$. This is because if it were, this would imply that $i$ has a neighbor employing $b(i)$, which is nonsensical because $i$ itself is employing $b(i)$ and we are assuming $i$'s strategy is different its neighbors strategies. Now we have shown that $i$'s best response is neither $a(i)$ nor $c(i)$ (when $c(i) \neq b(i)$), it follows that $i$'s best response is $b(i)$, which is what it is playing.

(3) Follows directly from (1) and (2).

(4) Follows directly from (1) and (2).

To see (5), suppose $i \in \{ 2 , 3 ,..., N-1 \}$ switches to $r \neq \sigma_{i-1}$. By (3), we know $i-1$ was employing its best response \emph{before} this switch. If $\sigma _{i-1} = a(i-1)$, then this strategy will clearly remain $i-1$'s best response.

 If $\sigma _{i-1} = b(i-1)$, then $\sigma _{i-2} = a(i-1)$. This implies $i-1$'s best response is \emph{not} $a(i-1)$.
Also, supposing $c(i-1) \neq b(i-1)$, we can see that $i$'s best response is \emph{not} $c(i-1)$. This is  because if it were, this would imply that $i-1$ has a neighbor employing $b(i-1)$, which is nonsensical because we know that $i-1$'s neighbors are playing $a(i-1) \neq b(i-1)$ and $r \neq b(i-1)$ respectively. So we have shown that $i-1$'s best response, in this case, must be $b(i-1)$.
\end{proof}

\begin{lemma}\label{lem:loop2}
Suppose $a(i) \neq b(i), \forall i \in \mathcal{I}$.
If there exists an $i^*$ such that $a(i^*) \neq b(i^*) \neq c(i^*) \neq a(i^*)$, then there exists an NE.
\end{lemma}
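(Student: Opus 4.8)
The plan is to use the special player $i^*$ as a \emph{buffer} that absorbs the wrap-around edge of the loop, so that closing the loop never unsatisfies any player. Exploiting the rotational symmetry of the loop, I would first relabel the players so that $i^* = N-1$, placing the special player at the ``seam'' between the last and the first indices processed by Algorithm \ref{algo}. I would then run Algorithm \ref{algo} with $\alpha = 0$ (so that $\sigma_0 = a(0)$) and keep the resulting strategies $\sigma_0, \ldots, \sigma_{N-2}$ for every player except $i^*$. By Lemma \ref{lem:algoprop}(3) every player $1, \ldots, N-2$ is already best-responding, and by Lemma \ref{lem:algoprop}(1) consecutive strategies differ. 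Finally I would \emph{override} $i^*$'s strategy, setting $\sigma_{N-1} := \beta^{N-1}(\{\sigma_{N-2}, \sigma_0\})$, its exact best response to its two loop-neighbors, so that $i^*$ is best-responding by construction.

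The heart of the argument is a single sub-claim: because $i^*$ has three \emph{distinct} values $a(i^*) \neq b(i^*) \neq c(i^*) \neq a(i^*)$, its best response to any pair of neighbor colors lies outside that pair, i.e. $\beta^{i^*}(\{x,y\}) \notin \{x,y\}$ for all $x,y \in \mathcal{R}$. I would verify this directly from the three response rules: if $a(i^*) \notin \{x,y\}$ the response is $a(i^*)$; if $a(i^*)$ appears but $b(i^*)$ does not, the response is $b(i^*)$; and if both $a(i^*)$ and $b(i^*)$ appear then $\{x,y\} = \{a(i^*),b(i^*)\}$ and the response is $c(i^*)$ --- in each case the chosen color is distinct from both $x$ and $y$ precisely by the three-way distinctness (the degenerate case $x=y$ is handled identically). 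Thus $i^*$ acts as a perfect buffer.

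Applying the buffer property to $\sigma_{N-1} = \beta^{N-1}(\{\sigma_{N-2}, \sigma_0\})$ yields both $\sigma_{N-1} \neq \sigma_{N-2}$ and $\sigma_{N-1} \neq \sigma_0$. The first inequality, together with Lemma \ref{lem:algoprop}(5) applied to the change of player $N-1$, keeps player $N-2$ best-responding. For player $0$, its neighbors are player $1$ (with $\sigma_1 \neq \sigma_0 = a(0)$ by Lemma \ref{lem:algoprop}(1)) and $i^*$ (with $\sigma_{N-1} \neq a(0)$ by the buffer property), so player $0$ has no neighbor on $a(0)$ and response rule 1 makes $\sigma_0 = a(0)$ its best response. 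Since the interior players are best-responding by Lemma \ref{lem:algoprop}(3) and $i^*$ is best-responding by construction, every player best-responds and the profile is an NE.

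The main obstacle is the sub-claim that $i^*$ always best-responds off \emph{both} of its neighbors: this is exactly where all three distinctness conditions $a(i^*) \neq b(i^*)$, $b(i^*) \neq c(i^*)$, and $c(i^*) \neq a(i^*)$ are consumed, and it is what lets the loop close with no case analysis on $\alpha$ and no further repair steps. The remaining work is pure bookkeeping: confirming that overriding $\sigma_{N-1}$ disturbs only the two players $N-2$ and $0$ adjacent to $i^*$, and checking that the seed choice $\alpha = 0$ (rather than $\alpha = 1$) is the one that makes player $0$'s rule-1 response coincide with $a(0)$.
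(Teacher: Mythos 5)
Your proof is correct, and it shares its skeleton with the paper's: both relabel so that $i^* = N-1$, seed Algorithm \ref{algo} with $\alpha = 0$, and rely on parts (1), (3) and (5) of Lemma \ref{lem:algoprop} to keep players $0$ through $N-2$ satisfied while the seam of the loop is repaired. The difference is in how the seam is closed. The paper keeps the algorithm's own assignment $\sigma_{N-1}$ and argues by cases: if $\sigma_{N-1} \neq \sigma_0$ the configuration is already an NE by part (4) of Lemma \ref{lem:algoprop}; otherwise it lets player $N-1$ take a single best-response step and verifies, through two further sub-cases (one of which is shown to be impossible), that this one improvement yields an NE. You instead discard the algorithm's value at $N-1$ and assign $\sigma_{N-1} := \beta^{N-1}(\{\sigma_{N-2},\sigma_0\})$ directly, and your ``buffer property'' --- that a player whose three responses $a(i^*), b(i^*), c(i^*)$ are pairwise distinct never best-responds onto either neighbor's strategy --- makes the verification uniform: it is exactly where all three distinctness hypotheses are consumed, and it removes the paper's case analysis entirely (part (4) is never needed). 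One point deserves to be made explicit rather than glossed: the buffer property uses the response rules on the degenerate multi-set $\{x,x\}$, in particular that $\beta^{i}(\{a(i),a(i)\}) = b(i)$; this is valid precisely because $a(i) \neq b(i)$ holds for every player under the lemma's hypothesis (it can fail when $a(i)=b(i)$), so your ``handled identically'' remark is justified but worth spelling out. In exchange for this extra sub-claim your argument is shorter and case-free; the paper's version additionally exhibits the NE as reachable from $A(0)$ by at most one improvement step, a fact the lemma does not require.
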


\begin{proof}
Suppose such an $i^*$ exists, and without loss of generality suppose $i^* = N-1$ (we can do this by relabeling the players without effecting the essential dynamics). Consider the configuration
$A(0) = \boldsymbol\sigma$ produced by Algorithm \ref{algo}. If $\sigma_{N-1} \neq \sigma _0$, then the system is an NE by part (4) of lemma \ref{lem:algoprop}. Next suppose $\sigma_{N-1} = \sigma _0$.
\begin{itemize}
\item If $\{ \sigma _{N-2} , \sigma _{0} \} = \{ a(N-1), b(N-1) \}$, then change $N-1$'s strategy to it's best response, which is $c(N-1)$. This will not cause $N-2$ to change their strategy by part (5) of lemma \ref{lem:algoprop}. Also this will not cause $0$ to  change its strategy, because it is employing $a(0)$ and still has no neighbors using this strategy. The system is hence in NE.
\item If, on the other hand, $\{ \sigma _{N-2} , \sigma _{0} \} \neq \{ a(N-1), b(N-1) \}$ then we must deal with two possibilities.
\begin{itemize}
\item If $\sigma _{N-1} = a(N-1)$, then $N-1$ will change to its best response which is $b(N-1)$. This will necessarily be different to $\sigma _{N-2}$ because of our assumption that $\{ \sigma _{N-2} , \sigma _{0} \} \neq \{ a(N-1), b(N-1) \}$ together with the fact that $\sigma_0 = \sigma _{N-1} = a(N-1)$.
    {After $N-1$ has changed its strategy to $b(N-1)$ its neighbors will be satisfied. In particular, $0$ will be satisfied because now it has no neighbors employing its strategy, $a(0)$. Also, $N-2$ will remain satisfied, by part (5) of lemma \ref{lem:algoprop}. It follows that, once $N-1$ has changed its strategy to $b(N-1)$, the system will be in NE.}
\item The other possibility we must deal with is that $\sigma _{N-1} = b(N-1)$. This actually cannot occur, because it implies $\sigma_0 = b(N-1)$ and $\sigma _{N-2} = a(N-1)$, which contradicts $\{ \sigma _{N-2} , \sigma _{0} \} \neq \{ a(N-1), b(N-1) \}$.
\end{itemize}
\end{itemize}
\end{proof}

Let $Q(i)$ denote the set (not multi-set) $\{ a(i) , b(i) , c(i) \}$ of $i$'s best responses  {in different scenarios}.
%to different surroundings.

\begin{lemma} \label{lem:loop3}
 Suppose that $\forall i \in \mathcal{I}$ we have that $a(i) \neq b(i)$ and $c(i) \in \{ a(i) , b(i) \}$. In this case there must exists an NE.
 \end{lemma}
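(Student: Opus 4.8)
The plan is to treat Lemma~\ref{lem:loop3} as the exact complement of Lemma~\ref{lem:loop2} inside the regime $a(i)\neq b(i)$ for all $i$: since here every player satisfies $c(i)\in\{a(i),b(i)\}$, no player is of the ``all--distinct'' type that Lemma~\ref{lem:loop2} exploited, so proving this case completes the exhaustive split (Lemma~\ref{lem:loop1} handling $a(i^{*})=b(i^{*})$). The first thing I would record is that under the hypothesis every player's best response, for \emph{every} neighbour configuration, lies in $\{a(i),b(i)\}$, so only each player's two ``top'' resources are ever in play. I would then isolate two refined response rules. A player with $c(i)=b(i)$ (call it \emph{yielding}) plays $a(i)$ precisely when no neighbour plays $a(i)$ and plays $b(i)$ otherwise; a player with $c(i)=a(i)$ (call it \emph{stubborn}) additionally keeps $a(i)$ when exactly one neighbour plays $a(i)$ and the other plays $b(i)$. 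These rules are what the junction analysis will be driven by.

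The construction mirrors the proof of Lemma~\ref{lem:loop2}. First I would run Algorithm~\ref{algo}; by parts (1) and (3) of Lemma~\ref{lem:algoprop}, players $1,\dots,N-2$ are already at their best responses and consecutive strategies differ along the line. If the wrap-around edge is consistent, i.e.\ $\sigma_{N-1}\neq\sigma_0$, then part~(4) of Lemma~\ref{lem:algoprop} already gives an NE. The only remaining situation is the closure conflict $\sigma_{N-1}=\sigma_0=a(0)$. The clean way to resolve it is a \emph{protected patch} at player $N-1$: since part~(5) of Lemma~\ref{lem:algoprop} guarantees that $N-1$'s line-neighbour $N-2$ stays satisfied as long as $N-1$'s new strategy differs from $\sigma_{N-2}$, moving $N-1$ to its best response against $\{\sigma_{N-2},a(0)\}$ will in the typical case carry it off the colour $a(0)$, thereby deleting player $0$'s unique $a(0)$-neighbour; then $0$'s best response reverts to $a(0)=\sigma_0$ and the whole profile is an NE (player $1$ being untouched).

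The hard part will be that, unlike in Lemma~\ref{lem:loop2}, there is no fresh third resource $c(i)\notin\{a(i),b(i)\}$ to absorb the conflict, so two obstructions arise. First, $N-1$'s best response can coincide with $\sigma_{N-2}$, which voids the protection of part~(5); this happens exactly in the subconfiguration where the algorithm gave $\sigma_{N-1}=a(N-1)=a(0)$ with $\sigma_{N-2}=b(N-1)$. Second, and worse, a stubborn player $N-1$ may have its best response against $\{\sigma_{N-2},a(0)\}$ equal to $a(0)$ itself, so it refuses to move and the conflict must instead be resolved at the seed player $0$, whose forward neighbour $1$ is \emph{not} covered by part~(5). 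I would handle the residual configurations by the seed parameter: flipping $\alpha$ from $0$ to $1$ (so $\sigma_0=b(0)$) is exactly equivalent to letting the unsatisfied player $0$ take its alternative and re-propagating, and whenever a yielding player exists I would relabel the loop so that it sits at index $N-1$, which rules out the stubborn-junction obstruction above and makes the protected patch go through.

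Consequently the crux reduces to certifying that the finite menu of options, namely the two seeds $\alpha\in\{0,1\}$ followed by the protected patch at $N-1$, jointly covers every junction configuration, and in particular to the \emph{all-stubborn} sub-case in which $c(i)=a(i)$ for every $i$ so that no yielding player can be rotated to the junction. For that sub-case I would give a direct argument: I expect to show that the update triggered by any admissible patch can only travel strictly inward along the line, where part~(5) of Lemma~\ref{lem:algoprop} keeps the already-satisfied players satisfied, so the process cannot cycle and must terminate; combined with the freedom in $\alpha$, this should force at least one seed to close consistently. Verifying that these alternatives are genuinely exhaustive, and that the all-stubborn seed always admits a consistent closure, is the step I expect to be the main obstacle, since it is the only place where the absence of a spare resource prevents the Lemma~\ref{lem:loop2} patch from being applied verbatim.
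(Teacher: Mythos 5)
Your case split and junction--patch strategy track the machinery of Lemma~\ref{lem:loop2}, but the case you yourself flag as the main obstacle is exactly where the argument stops being a proof. When both seeds $\alpha\in\{0,1\}$ produce the closure conflict and the junction player either refuses to move or moves onto $\sigma_{N-2}$, you are left with an improvement cascade travelling around the loop, and your claim that it ``can only travel strictly inward \ldots{} cannot cycle and must terminate'' is unsupported: part (5) of Lemma~\ref{lem:algoprop} protects only the backward neighbour of a player who moves to something different from $\sigma_{i-1}$, it gives no forward protection (player $0$'s move can disturb player $1$, whose move can disturb player $2$, and so on around the loop), and since the FIP fails in general for three or more resources (Appendix~\ref{appendix:3res}) you cannot appeal to any generic termination principle. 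Your relabeling claim is also false as stated: placing a \emph{yielding} player at index $N-1$ does not rule out the obstruction, because a yielding junction player with $\sigma_{N-1}=a(N-1)=a(0)$ and $\sigma_{N-2}=b(N-1)$ has best response $c(N-1)=b(N-1)=\sigma_{N-2}$, which is precisely the move that voids part (5). Likewise, the stubborn/yielding dichotomy is not the right invariant for isolating the hard case.

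What is missing is the paper's key structural step: a trichotomy on the menus $Q(i)=\{a(i),b(i),c(i)\}$. Either (1) some adjacent pair $i,j$ has $a(i)\notin Q(j)$, or (2) some adjacent pair has $b(i)\notin Q(j)$, or (3) neither holds, in which case $c(i)\in\{a(i),b(i)\}$ and $a(i)\neq b(i)$ force $Q(i)=Q(i+1)$ for all $i$, so no player ever best-responds outside the common two-element set $Q(0)$ and the game is equivalent to a two-resource game, where Theorem~\ref{thm:fip-2color} supplies an NE. In cases (1) and (2) no patching or cascade is needed at all: after relabeling so the exceptional pair sits at $(0,N-1)$, a single run of Algorithm~\ref{algo} with seed $\alpha=0$ (resp.\ $\alpha=1$) is already an NE --- in case (1) because $a(0)\notin Q(N-1)$ forces $\sigma_{N-1}\neq\sigma_0$ so part (4) of Lemma~\ref{lem:algoprop} applies, and in case (2) because $b(0)\notin Q(N-1)$ together with the failure of (1) forces player $0$'s best response to be $b(0)=\sigma_0$. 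Your problematic ``all-stubborn'' configurations with coinciding menus land squarely in case (3), and there the only known resolution is the reduction to Theorem~\ref{thm:fip-2color}; without that reduction (or a proved substitute for it), the proposal does not establish the lemma.
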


\begin{proof}
For any game satisfying the restrictions here, {\rev{one} of the following} is true
\begin{enumerate}
\item $\exists i \in \mathcal{I}, j \in \{ i-1 , i+1 \}$ such that $a(i) \notin Q(j)$,

\item $\exists i \in \mathcal{I}, j \in \{ i-1 , i+1 \}$ such that $b(i) \notin Q(j)$,

\item The game is equivalent to a game with $| \mathcal{R}| = 2$ resources.
\end{enumerate}

We show this by contradiction.
Suppose that (1) and (2) are both false, then since $Q(i) = \{ a(i) , b(i) \}$ and $a(i) \neq b(i)$ $\forall i$, we will have that $Q(i) = Q(i+1)$, $\forall i$. This implies that the game can be emulated by a game with $\mathcal{R} = Q(0)$ (because no player ever has a best response outside of $Q(0)$). Since $|Q(0)| = 2$ we have shown (3). Moreover, (3) implies that we can use Theorem \ref{thm:fip-2color} to prove the existence of an NE in the system.

From now on assume that (3) is false. This means that (1) or (2) must hold.

\rev{Assume} that (1) holds. We suppose, without loss of generality, that $a(0) \notin Q(N-1)$ (we can do this by relabeling the players without effecting the essential dynamics of the system). In this case, the configuration $A(0) = \boldsymbol\sigma$ will be a \rev{NE} by part (4) of lemma \ref{lem:algoprop}.

\rev{Assume} (1) does not hold. This implies that (2) is true, so we shall suppose (once again without loss of generality) that $b(0) \notin Q(N-1)$. Consider the configuration $A(1) = \boldsymbol\sigma$ generated by Algorithm \ref{algo}. Under this configuration, each player in $\{1,2,.., N-1 \}$ will be playing their best response according to part (3) of lemma \ref{lem:algoprop}. Player $N-1$ will also be playing its best response by part (2) of lemma \ref{lem:algoprop}.

\rev{We now} show that player $0$ is also employing \rev{its} best response. If $c(0) \neq b(0)$, then $c(0)$ is not player $0$'s best response. The reason for this is that $\sigma _{N-1} \neq b(0)$ and $\sigma _{1} \neq b(0)$ (recall that any $i$ must have a neighbor employing $b(i)$ in order for $c(i) \neq b(i)$ to be their best response).

\rev{Next} we show that $a(0)$ is not player $0$'s best response.
First, since (1) is false it must be that $a(0) \in Q(1) = \{ a(1) , b(1) \}$.
 If $a(0) = a(1)$ then $\sigma _1 = a(1)$ and we are done (because $0$ has a neighbor playing $a(0)$ ).

Now suppose, alternatively, that $a(0) = b(1)$.
In this case, the fact that (1) is false implies $a(1) \in Q(0) = \{ a(0) , b(0) \}$. Since $a(0) = b(1)$ and $b(1) \neq a(1)$, this means $a(1) = b(0)$. In other words, Algorithm \ref{algo} cannot assign $\sigma _1 := a(1)$ and so $\sigma _1 = b(1)$ (which is equal to $a(0)$). Now, since $0$ has a neighbor playing $a(0)$, this (once again) cannot be $0$'s best response.

Hence we have shown that $0$'s best response cannot be $a(0)$ and cannot be $c(0)$ (when $c(0) \neq b(0)$) and so $0$ is playing its best response, $b(0)$, and the system is in NE.
\end{proof}

With these lemmas in place, we can prove the stated result as follows.
 If $\exists i^* \in \mathcal{I}$ such that $a(i^*) = b(i^*)$, then lemma \ref{lem:loop1} implies the existence of an NE. Now suppose that
$a(i) \neq b(i)$ $\forall i \in \mathcal{I}$. If $\exists i^* \in \mathcal{I}$ such that $c(i^*) \notin \{ a(i^*),  b(i^*) \}$, then lemma \ref{lem:loop2} implies the existence of an NE. Now suppose that $c(i) \in \{ a(i) , b(i) \}$ $\forall i \in \mathcal{I}$. Now lemma \ref{lem:loop3} implies the existence of an NE. \hfill \qed

%\section{ACKNOWLEDGMENTS}

%%%%%%%%%%%%%%%%%%%%%%%%%%%%%%%%%%%%%%%%%%%%%%%%%%%%%%%%%%%%%%%%%%%%%%%%%%%%%%%%
%\bibliographystyle{IEEE}
%\bibliography{OSA,mingyan,CG}

\end{document}